\newcommand{\norm}[1]{\left\lVert#1\right\rVert}	
\newtheorem{corollary}{Corollary}
\newtheorem{proposition}{Proposition}
\newtheorem{remark}{Remark}
\renewcommand{\qedsymbol}{$\blacksquare$}
\begin{document}

%
% paper title
% Titles are generally capitalized except for words such as a, an, and, as,
% at, but, by, for, in, nor, of, on, or, the, to and up, which are usually
% not capitalized unless they are the first or last word of the title.
% Linebreaks \\ can be used within to get better formatting as desired.
% Do not put math or special symbols in the title.
\title{A Novel NOMA Solution with RIS Partitioning}
%
%
% author names and IEEE memberships
% note positions of commas and nonbreaking spaces ( ~ ) LaTeX will not break
% a structure at a ~ so this keeps an author's name from being broken across
% two lines.
% use \thanks{} to gain access to the first footnote area
% a separate \thanks must be used for each paragraph as LaTeX2e's \thanks
% was not built to handle multiple paragraphs
%

\author{Aymen~Khaleel,~\IEEEmembership{Graduate Student Member,~IEEE} and
        Ertugrul~Basar,~\IEEEmembership{Senior Member,~IEEE}
\thanks{This work was supported by the Scientific and Technological Research Council of Turkey (TUBITAK) under Grant 120E401.
	
The authors are with the Communications Research and Innovation Laboratory (CoreLab),  Department of Electrical and Electronics Engineering, Ko\c{c} University, Sariyer 36050, Istanbul, Turkey. \mbox{Email: akhaleel18@ku.edu.tr, ebasar@ku.edu.tr}}% <-this % stops a space
}

\maketitle

% As a general rule, do not put math, special symbols or citations
% in the abstract or keywords.
\begin{abstract}
 Reconfigurable intelligent surface (RIS) empowered communications with non-orthogonal multiple access (NOMA) has recently become an appealing research direction for next-generation wireless communications. In this paper, we propose a novel NOMA solution with RIS partitioning, where we aim to enhance the spectrum efficiency by improving the ergodic rate of all users, and to maximize the user fairness. In the proposed system, we distribute the physical resources among users such that the base station (BS) and RIS are dedicated to serve different clusters of users. Furthermore, we formulate an RIS partitioning optimization problem to slice the RIS elements between the users such that the user fairness is maximized. The formulated problem is shown to be a non-convex and non-linear integer programming (NLIP) problem with a combinatorial feasible set, which is challenging to solve. Therefore, we exploit the structure of the problem to bound its feasible set and obtain a sub-optimal solution by sequentially applying three efficient search algorithms. Furthermore, we derive exact and asymptotic expressions for the outage probability. Simulation results clearly indicate the superiority of the proposed system over the considered benchmark systems in terms of ergodic sum-rate, outage probability, and user fairness performance.
\end{abstract}

% Note that keywords are not normally used for peerreview papers.
\begin{IEEEkeywords}
Reconfigurable intelligent surface (RIS), non-orthogonal multiple access (NOMA), user fairness, sum-rate.
\end{IEEEkeywords}

% For peer review papers, you can put extra information on the cover
% page as needed:
% \ifCLASSOPTIONpeerreview
% \begin{center} \bfseries EDICS Category: 3-BBND \end{center}
% \fi
%
% For peerreview papers, this IEEEtran command inserts a page break and
% creates the second title. It will be ignored for other modes.
\IEEEpeerreviewmaketitle

\section{Introduction}
% The very first letter is a 2 line initial drop letter followed
% by the rest of the first word in caps.
% 
% form to use if the first word consists of a single letter:
% \IEEEPARstart{A}{demo} file is ....
% 
% form to use if you need the single drop letter followed by
% normal text (unknown if ever used by the IEEE):
% \IEEEPARstart{A}{}demo file is ....
% 
% Some journals put the first two words in caps:
% \IEEEPARstart{T}{his demo} file is ....
% 
% Here we have the typical use of a "T" for an initial drop letter
% and "HIS" in caps to complete the first word.
\IEEEPARstart{N}{on-orthogonal}  multiple
%%%%%%%%%%%%%%%%%%N O M A%%%%%%%%%%%%%%
 access (NOMA) has been regarded as one of the promising technologies to address the increasing demand for high data rates, massive connectivity, and spectrum efficiency associated with fifth-generation (5G) and beyond networks. Due to the use of higher carrier frequencies and in order to fulfill the verticals' specific requirements, local service areas are among the main envisioned deployment models in 5G \cite{5G-local}. Within this context, NOMA can play an important role in order to efficiently exploit the spectrum and serve higher numbers of users in such networks \cite{NOMA-5G}. This is because NOMA allows the sharing of the same time/frequency/code resources between users and thus, enhances the spectrum efficiency and decreases the latency by allowing more users to be connected in the same time-frequency slot \cite{NOMA}. In power domain (PD)-NOMA \cite{PD-NOMA}, the difference in channel gains of different users is exploited and, accordingly, different power levels are assigned to different users by the use of superposition coding (SC) at the base station (BS) side. At the receiver side, each user employs the successive interference cancellation (SIC) technique to recover its own message. Compared to the orthogonal multiple access (OMA), it has been shown that NOMA has a superior performance in terms of the outage probability and the ergodic sum-rate \cite{NOMA-OP}, \cite{NOMA-Frame},\cite{NOMA-OMA}.
 
%%%%%%%%%%%%%%% R I S %%%%%%%%%%%%%%%%%%%
Recently, reconfigurable intelligent surface (RIS)-assisted communication has received growing attention as a potential next-generation technology due to its promising capabilities to control the wireless propagation environment. RIS is an array of low-cost and passive reflecting elements that can be used to re-engineer the electromagnetic waves by adjusting the reflection coefficient of each element \cite{RIS1}, \cite{RIS2}. Due to their promising advantages, RISs have been integrated to many existing wireless technologies as a main or supportive module in the communications system. In \cite{RIS-Space}, two RIS-assisted space shift keying (SSK) schemes are proposed to enhance the performance of the classical SSK system in terms of the bit error rate and system throughput. In \cite{RIS-Quad}, the authors proposed RIS-based receive quadrature reflecting modulation by partitioning the surface into two parts, where the in-phase and quadrature components are sent from each part separately. In \cite{EB1}, an RIS is integrated to a conventional single-input single-output (SISO) system to achieve an ultra-reliable communication system. In \cite{EB2}, an RIS is used as a part of the transmitter to perform different types of  index modulation (IM) and thus, obviating the need for a multi-antenna BS. These techniques are extended to multiple-input multiple-output (MIMO) systems in \cite{AK}, where an RIS is used to replace the  radio frequency (RF) chains in Alamouti's scheme, and to boost the data rate for vertical Bell Labs layered space-time (VBLAST) system by using IM.
%%%%%%%%%%%%%% Related Works: RIS - N O  M A%%%%%%%%%%%%
\vspace{-0.2cm}
\subsection{Related Works}
 Recently, the joint operation of RISs with NOMA has appeared as an appealing research direction and the combination of the hardware capabilities of RISs and the SC technique of PD-NOMA has been investigated. In \cite{Benchmark} and \cite{Bench2}, the rate performance and user fairness of an RIS-assisted NOMA system are optimized by maximizing the minimum decoding signal-to-interference-plus-noise ratio (SINR) of all users, which is achieved by the joint optimization of the active and passive beamforming at the BS and the RIS, respectively. The authors in \cite{RIS-NOMA-Vinc} considered an RIS-assisted  multiple-input single-output (MISO) NOMA system where the cell-center users are served by using spatial division multiple access (SDMA) and the cell-edge users are served by RISs. In \cite{RIS_NOMA_Net}, by considering a priority-oriented design, an RIS-assisted SISO NOMA network is proposed, where the passive beamforming weights are designed at the RIS side in order to enhance the spectrum efficiency. In \cite{RIS-NOMA-ENRGY}, the energy efficiency is enhanced in a downlink RIS-assisted NOMA system by jointly optimizing the user clustering, passive beamforming, and power allocation. The user fairness is considered in \cite{RIS-NOMA-Fair}, where the authors investigated the joint optimization of power allocation, decoding order, and the RIS phase shifts to maximize the minimum user rate considering a total power constraint. In \cite{RIS-NOMA-FRAME}, the deployment and passive beamforming design of an RIS are investigated for an RIS-assisted MISO  NOMA system, in order to maximize the energy efficiency under the constraint of preserving the individual data rate requirements for users. Joint optimization for the active beamforming matrices at the BS and the reflection coefficient vector at the RIS is utilized in \cite{RIS-NOMA-MultiCls} in order to minimize the total transmit power for a multi-cluster MISO NOMA networks. In \cite{RIS-NOMA-SC}, a multi-cluster RIS-assisted MIMO NOMA network is considered, where by designing its passive beamforming weights, the RIS is employed in a signal cancellation mode to eliminate the inter-cluster interference. The authors in \cite{RIS-NOMA-SG1} and \cite{RIS-NOMA-SG2} used stochastic geometry to investigate the coverage probability and ergodic rate of an RIS-assisted multi-cell NOMA networks for outdoor scenarios by using Poisson cluster process (PCP) model. The authors in \cite{RIS-NOMA-Cord} considered the combination of joint transmission coordinated multipoint (JT-CoMP) with the RIS technology in order to enhance the cell-edge user ergodic rate performance without degrading the performance of the cell-center user. In \cite{RIS-NOMA-Phs}, the authors investigated the impact of the coherent and the random discrete phase-shifting designs on an RIS-assisted NOMA system. Finally, the authors in \cite{RIS-NOMA-ResAloc} considered the resource allocation problem in an RIS-assisted NOMA system, and jointly optimized the channel assignments, power allocation, decoding order, and RIS reflection coefficients, in order to maximize the system throughput.
 \vspace{-0.5cm}
\subsection{Motivation and Contributions} 
In light of the above discussion, it can be noted that a common physical resource (PR) allocation scheme is followed by all of the previous works, which can be summarized as follows. The BS and RIS are both used to serve all users by using a single SC message, where all users are assumed to be in the field-of-view (FoV) of the RIS. Furthermore, by adjusting its reflection coefficients, the RIS is used as a single unit to serve all users jointly. Considering this PR allocation scheme, the main factor that limits the users’ performance becomes the mutual interference that underlies the SC technique, which can be seen clearly when the users are deployed randomly in and out of the FoV of the RIS. In such a users' deployment scenario, not all users share the same PR (BS and RIS) and therefore, the use of a single SC message for all users adversely affects user fairness and unnecessarily amplifies the mutual interference between the users' messages.  

Against this background, we propose an RIS-assisted novel NOMA system, where a more efficient PR allocation scheme is proposed to enhance user fairness and effectively mitigate the impact of the mutual interference between users. In the proposed system, the users are grouped into two clusters, Cluster 1 ($\text{C}_1$) contains all the users out of the FoV of the RIS, and  Cluster 2 ($\text{C}_2$) contains the ones inside it.
The BS is dedicated to serve the users in $\text{C}_1$ and the RIS is portioned into sub-surfaces, where each sub-surface is exploited to serve a different user in $\text{C}_2$. The main contributions of this paper can be summarized as follows:
\vspace{-0.40cm}
\begin{itemize}[
	\setlength{\IEEElabelindent}{\dimexpr-\labelwidth-\labelsep}% Wrapping of text beyond first line of \item
	\setlength{\itemindent}{\dimexpr\labelwidth+\labelsep}% identation for each new \item
	\setlength{\listparindent}{\parindent}]
	\item To the best of the authors' knowledge, this study considers the PR problem when the users are deployed in and out of the FoV of the RIS, for the first time. To address this issue, we propose a novel PR scheme that employs  RIS partitioning to mitigate the mutual interference between the users in and out of the FoV of the RIS. This leads to an effective enhancement in the performance of all users in terms of ergodic rate, outage probability, fair distribution of the PRs among users, and simplifies the detection process.
    \item We formulate an RIS partitioning optimization problem to find a proper RIS slicing that maximizes the fairness among $\text{C}_2$ users. Although the fact that the formulated problem is a non-convex and non-linear integer programming (NLIP) one, we exploit the structure of the problem, specifically the nature of transmission over the RIS, to provide a sub-optimal solution with marginal performance degradation. Note that, unlike the partitioning schemes adopted in \cite{RIS-part1} and \cite{RIS-part2}, to obtain a scalable optimization framework for the phase-shift adjustment of large RISs, we partition the RIS so that each sub-surface works as a modulator and beamformer simultaneously. In this way, each sub-surface sends an independent data stream to the user assigned to it independently from the other sub-surfaces.
    \item We derive the exact and asymptotic outage probability expressions for users in $\text{C}_2$. Accordingly, under the uniform partitioning scenario, we obtain the required number of RIS elements that need to be allocated for each user in $\text{C}_2$ to obtain a given outage probability value for all users.
    \item With comprehensive computer simulations, we compare our proposed system with four different benchmark schemes and show the superiority of our proposed system in terms of the ergodic sum-rate, outage probability, and user fairness.
    \end{itemize}
\indent The rest of the paper is organized as follows. In Section \ref{sec:Main}, we introduce the system model and describe the transmission mechanism in detail. The outage probability and its asymptotic behaviour are formulated in Section \ref{sec:OP}. In Section \ref{sec:SpltAprh}, we introduce our proposed RIS partitioning approach and provide the system performance analysis of two special cases for the proposed system in different deployment scenarios. Computer simulations are provided in Section \ref{sec:Simu} followed by the conclusions in Section \ref{sec:Concl}.\footnote{\textit{Notation}: Matrices and column vectors are denoted by an upper and lower case boldface letters, respectively. $\mathbf{X}\in\mathbb{C}^{m\times k}$ denotes a  complex-valued matrix $\mathbf{X}$ with $m\times k$ size, where $\mathbf{X}^T$ is the transpose and $[\mathbf{X}]_{n,\tilde{n}}$ is the ($n,\tilde{n}$)-th entry. $\mathbf{0}_N$, $\mathbf{I}_N$, $m\choose k$, $\lfloor\cdot\rfloor$, $\lceil\cdot\rceil$,  and $\text{mod}(\cdot)$ are the $N$-dimensional all-zeros column vector, the $N\times N$ identity matrix, the binomial coefficient, the floor, ceiling, and modulus functions, respectively. $x\sim\mathcal{CN}(0,\sigma^2)$ stands for complex Gaussian distributed random variable (RV) with mean $\text{E}[x]=0$ and variance $\text{VAR}[x]=\sigma^2$. $\mathbb{R}$, $\mathbb{Z}^+$, and $|\mathcal{S}|$ are the set of real numbers, the set of positive integer numbers, and the cardinality of the set  $\mathcal{S}$, respectively.}
\vspace{-0.4cm}
%%%%%%%%%%%%End of Literature%%%%%%%%%
%%%%%%%%%%%%%%%% S Y S T E M M O D E L%%%%%%%%%%%%%%%%%%%%%
\section{RIS Partitioning and Rotating: System Model}\label{sec:Main}
 Consider a downlink NOMA system where single-antenna users are served by a single-antenna BS\footnote{The RIS is deployed in the direct line-of-sight (LoS) of the BS to compensate for the high path loss associated with the RIS. This leads to a rank-one BS-RIS channel, where no multiplexing gain can be achieved by using multiple transmit antennas. Furthermore, it has been shown that the array gain vanishes as the number of users increases irrespective of the number of transmit antennas and RIS size, and the number of users that can be efficiently served is one \cite{max-min}.} and an RIS of $N=N_HN_V$ elements, where $N_H$ and $N_V$ denote the number of elements per row and per column, respectively. The users are assumed to be grouped into two clusters\footnote{$\text{C}_1$ and $\text{C}_2$ are the main clusters that identify the users who are out of the FoV of the RIS and served by the BS, and the ones who are in the FoV of the RIS and served by the RIS, respectively. Therefore, it is possible to use a second level of clustering inside these two main clusters \cite{NOMA-Clustering}, however, the investigation of such a scenario is out of the scope of this study.}, $\text{C}_1$ with $ M_1$ users and $\text{C}_2$ with $M_2$ users, as shown in Fig. 1. The RIS is deployed close, yet in the far-field\footnote{This simplifies the system analysis, where for the the near-field assumption we need to consider the distances of the individual RIS elements from the BS and thus, the effective BS antenna area and the polarization mismatch associated with each RIS element \cite{near-field}. }, to the BS in order to exploit the pure line-of-sight (LoS) channels and to make the use of a backhaul link from the BS to the RIS practical. Furthermore, denoting the $m^{th}$ user in $\text{C}_2$ by $\text{U}_{m,2}$, perfect channel state information (CSI) of the BS-RIS-$\text{U}_{m, 2}$ link for all users in $\text{C}_2$ and the channel gains of all users in $\text{C}_1$ need to be available at the BS side, under the quasi-static flat-fading channels assumption. Although the fact that the perfect CSI assumption is practically challenging, nevertheless, it is adopted by the vast majority of the RIS-assisted NOMA works in the literature \cite{max-min}, \cite{csi_perfect1}, \cite{csi_perfect2}. Therefore, the CSI is assumed to be obtained by using one of the proposed methods in the literature \cite{ch-est1}, \cite{ch-est2}, and thus, the performance results provided in this study can serve as an upper bound to the ones achievable in practical implementation.
 
 Instead of using a single SC message that combines all users' symbols, the $\text{C}_1$ and $\text{C}_2$ users' symbols are simultaneously transmitted over the BS direct link and the RIS reflection link, respectively, as follows.
 %%%%%%%%%%%%%Main Figure %%%%%%%%%%%%%%
 \begin{figure}[t]
 	%\begin{center}
 	\includegraphics[width=91mm, height=50mm]{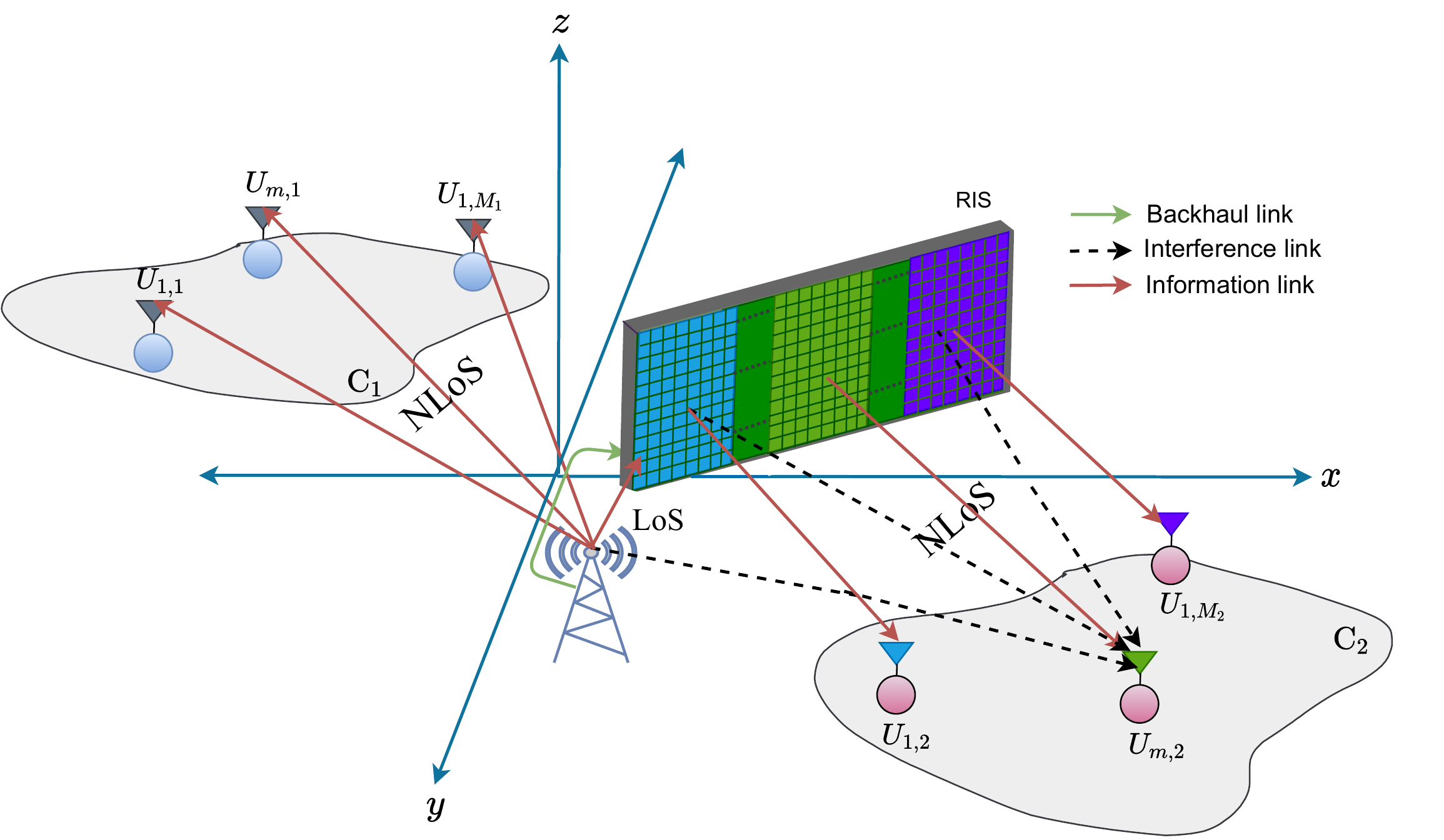}
 	\caption{RIS partitioning based NOMA system.}\label{fig:BlockDiagram}
 	%\end{center}
 	\vspace{-0.5cm}
 \end{figure}
\vspace{-0.0cm}
 %%%%%%%%%%%%%%%%%%%%%%%%%%%%%%%%% 
%%%%%%%%%%%%%%%%%%%%%C_1 Transmission%%%%%%%%%%%%%% 
 \subsection{The transmission of $\text{C}_1$ users' symbols}
Let $T_c$ denotes the duration of the channel coherence block, wherein all channels remain constant, while all channel coefficients are independent and identically distributed (i.i.d.) over different coherence blocks. Within $T_c$, the BS transmits $x$, which is the superposed signal of all symbols to be transmitted to $M_1$ users in $\text{C}_1$. As in conventional PD-NOMA, $x$ is constructed as follows:
\setlength{\abovedisplayskip}{3pt}
\setlength{\belowdisplayskip}{3pt}
\begin{align}
	x=&\sum_{m=1}^{M_1}\sqrt{P\zeta_{m}}x_{m},\label{eq:x}
\end{align}
where $P$ is the transmit power, $\zeta_{m}$ and $x_{m}$ are the power allocation factor and the symbol to be transmitted to user $m$ in $\text{C}_1$
($\text{U}_{m, 1}$), respectively, where $\sum_{m=1}^{M_1}\zeta_{m}=1$. Note that $x$ can be represented in the form $x=ue^{-j\theta_{M_1}}$, where $u=|x|$ is the amplitude and $\theta_{M_1}=\text{arg}(x)$ is the angle. From \eqref{eq:x}, it is clear that $u$ and $\theta_{M_1}$ are both RVs, where $\theta_{M_1} \in[0,2\pi)$ and $u\in\mathbb{R}$. Without loss of generality, in order to simplify our derivations, $u$ is assumed to be a constant value that is equal to unity, which can be achieved by the proper design of the constellations used to obtain $x_1$ and $x_2$ for $\text{C}_1$ users \cite{Constellations1}, \cite{Constellations2}. For example, consider the simple case of two users in $\text{C}_1$, where $x_1\in\{\frac{t}{\sqrt{2}}(1+1j), \frac{t}{\sqrt{2}}(-1-1j)\}$ and $x_2\in\{\frac{t}{\sqrt{2}}(1-1j), \frac{t}{\sqrt{2}}(-1+1j)\}$, then for any random values of $\zeta_1$ and $\zeta_2$ we obtain $u=|t|$, where $t\in\mathbb{R}$.

The signal received by each user $m$ in $\text{C}_1$ is given by
\begin{align}
	\tilde{y}_{m}=\tilde{v}_{m}x+\tilde{z}_{m},\label{eq:y1}
\end{align}
where $\tilde{v}_m$ is the BS-$\text{U}_{m, 1}$ channel coefficient and $\tilde{z}_{m}$ is the additive white Gaussian noise (AWGN) sample at $\text{U}_{m, 1}$. 

From \eqref{eq:y1}, it can be noted that there is no interference from $\text{C}_2$ users’ symbols received by $\text{C}_1$ users, due to the location of  $\text{C}_1$ being behind the RIS. Although $\text{C}_1$ users still need to use SIC to recover their own signals, the SIC is performed with significantly less number of iterations due to the absence of $\text{C}_2$ users’ interference. This simplifies the detection process and effectively reduces the error propagation associated with the SIC technique. In this way, the system performance for the $M_1$ users in $\text{C}_1$ follows the one of the conventional PD-NOMA, therefore, its analysis is omitted in this study and we focus on the system performance of $\text{C}_2$. Nevertheless, we still perform the numerical simulations in Section V for the users in $\text{C}_1$ along with $\text{C}_2$ users to show that the proposed system enhances the performance of both $\text{C}_1$ and $\text{C}_2$ users compared to the benchmark systems.
\vspace{-0.0cm}
%%%%%%%%%%%%%%%%%%%%%%%%%%%%%%%%%%%%%%%%%%%%%%%%
\subsection{The transmission of $\text{C}_2$ users' symbols}
Instead of using the SC technique to send the symbols of $\text{C}_2$ users in a single message from the BS, the symbol of each user $m$ is sent over the RIS reflection link independent from the symbols belong to the other users. Therefore, the RIS is partitioned into $M_2$ sub-surfaces, where each sub-surface $i$ has $N_i$ elements and is allocated to serve a specific user $m$ in $\text{C}_2$, for $i=m$, where $m=1, 2, ... M_2$. Specifically, each sub-surface $i=m$ remodulates the same impinging signal $x$ to reflect the $\text{U}_{m,2}$ symbol over the RIS-$\text{U}_{m,2}$ reflection link. Thus, by considering only a single reflection from the RIS elements \cite{RIS-Reflct} and within the same $T_c$, the signal received by each user $m$ in $\text{C}_2$ can be obtained as follows: 
\begin{align}
	y_{m}&\hspace{-0.1cm}=\hspace{-0.1cm}\left[\mathbf{g}_{m,m}^T\mathbf{\Theta}_{m}\mathbf{h}_m\hspace{-0.1cm}+\hspace{-0.1cm}\smash[b]{\underset{i\neq m}{\sum}}_{i=1}^{M_2-1} \mathbf{g}_{i,m}^T\mathbf{\Theta}_{i}\mathbf{h}_i+v_m \right]x\hspace{-0.1cm}+\hspace{-0.1cm}z_m \label{eqSplit1},
\end{align}
where $z_m\sim\mathcal{CN}(0,\sigma^2)$ is the AWGN sample at $\text{U}_{m, 2}$, $v_m$ is the BS-$\text{U}_{m, 2}$ channel coefficient, $v_m\sim\mathcal{CN}(0,L^{\text{BS}}_m)$, under Rayleigh fading assumption \cite{max-min}, \cite{LoS-Rayleigh},  where $L^{\text{BS}}_m$ denotes the BS-$\text{U}_{m, 2}$ path gain. Assuming pure LoS links \cite{LoS-Rayleigh}, $\mathbf{h}_i\in\mathbb{C}^{N_i\times 1}$ is the BS-($i^{th}$ sub-surface) LoS channel vector with $[\mathbf{h}_i]_n=\sqrt{L^{\text{RISh}}}e^{-j\psi^{(n)}}$, where $L^{\text{RISh}}$ denote the path gain, $\psi^{(n)}=\pi(n-1) \text{sin}\psi_A\text{sin}\psi_E$,  where $\psi_E$ and $\psi_A$ denote the LoS elevation and azimuth angles of arrival (AoA)s at the RIS, respectively \cite{max-min}. Since the RIS is deployed in the far-field of the BS, the path gain experienced by each element is assumed to be the same, where the spacing between the elements is small compared to the BS-RIS distance. $\mathbf{g}_{i,m}\in\mathbb{C}^{N_i\times 1}$ is the RIS ($i^{th}$ sub-surface)-$\text{U}_{m, 2}$ channel vector, $[\mathbf{g}_{i,m}]_n=\sqrt{L^{\text{RISg}}_{m}}g_{i,m}^{(n)}$, where $g_{i,m}^{(n)}$ and  $L^{\text{RISg}}_{m}$ are the RIS ($i^{th}$ sub-surface $n^{th}$ element)-$\text{U}_{m, 2}$ small-scale fading coefficient and path gain, respectively. $g_{i,m}^{(n)}=\beta_{i,m}^{(n)}e^{-j\phi_{i,m}^{(n)}}$, where $\beta_{i,m}^{(n)}$ and $\phi_{i,m}^{(n)}$ denote the channel amplitude and phase, respectively. Thus, the overall RIS-$\text{U}_{m, 2}$ channel vector can be given as $\mathbf{g}_m=[\mathbf{g}_{1,m}\;...\;\mathbf{g}_{i,m}\;...\;\mathbf{g}_{M_2,m}]^T$, $\mathbf{g}_m\in\mathcal{CN}(\mathbf{0}_N,\mathbf{R}_{\text{RIS}})$, where $\mathbf{R}_{\text{RIS}}\in \mathbb{C}^{N\times N}$ is the RIS spatial correlation matrix. $\mathbf{\Theta}_i\in\mathbb{C}^{N_i\times N_i}$ is the matrix of reflection coefficients for the $i^{th}$ sub-surface of the RIS, $[\mathbf{\Theta}_i]_n=\eta_i^{(n)}e^{j\Phi_i^{(n)}}$, where $\Phi_i^{(n)}\in [0,2\pi)$ and $\eta_i^{(n)}=1$, $\forall i, n$, assuming full reflection\footnote{Note that, practically, there are phase-dependent amplitude variations associated with the reflection coefficient of each RIS element \cite{phase-dependent}. However, we assume a constant reflection amplitude for the mathematical analysis simplification, as this assumption has no effect on the proposed system design.}. In what follows, we describe the process of remodulating the impinging signal $x$ at the RIS sub-surfaces to reflect the symbols of the $\text{C}_2$ users.

By properly and independently adjusting its phase shifts, each sub-surface $i=m$ performs two roles, namely, maximizing the channel gain by making the BS-RIS-$\text{U}_{m, 2}$ channel phases equal to zero (passive beamforming), and remodulating the impinging signal $x$ to reflect the symbol to be transmitted to $\text{U}_{m,2}$. Note that since the transmission of the RIS to $\text{C}_2$ depends on the BS transmission to $\text{C}_1$, the symbol rate is the same for all users in both clusters. To serve user $m$ in $\text{C}_2$, the phases of the $i=m$ sub-surface are adjusted such that the phase of each element is given as
%%%%%%%%%%%%%Phase adjustment %%%%%%%%%%%%%%%%%
\begin{align}
	\Phi_m^{(n)}=\phi_{m,m}^{(n)}+\psi^{(n)}+\theta_m+\theta_{M_1}, n=1, ..., N_m,\label{eq:phs-adjust}
\end{align}
%%%%%%%%%%%%%Phase adjustment %%%%%%%%%%%%%%%%%
where $\theta_m$ is the phase shift keying (PSK) symbol to be transmitted to user $m$ in $\text{C}_2$, furthermore, $\Phi_m^{(n)}$ is assumed to be calculated at the BS and sent to the RIS controller over a backhaul link. By considering the close distance and the LoS link between the BS and the RIS, a wired/out-band wireless backhaul link can be used without affecting the useful bandwidth used in \eqref{eq:x} \cite{feedback-link1}, \cite{backhaul2}. In this way, the BS and RIS perform a joint transmission synchronized by the backhaul link as in coordinated multi-point transmission \cite{coordinate}, where the achievable capacity of the backhaul link is assumed to be much higher than the one of the RIS-$\text{C}_2$ link.

The PSK modulation scheme is adopted for the users in $\text{C}_2$ due to the phase-dependent amplitude variation associated with the RIS reflection coefficient, where it is difficult to realize modulation schemes with a non-constant envelope \cite{phase-dependent}. 

\indent According to the RIS phase adjustment in \eqref{eq:phs-adjust}, \eqref{eqSplit1} can be re-expressed as 
\vspace{-0.1cm}
\begin{align}
	y_m=&\sqrt{L_m^{\text{RIS}}}\left[e^{j\theta_m}\sum_{n=1}^{N_m}\beta_{m,m}^{(n)}+\smash[b]{\underset{i\neq m}{\sum}}_{i=1}^{M_2-1} \left[\sum_{n=1}^{N_i}\beta_{i,m}^{(n)}e^{j\bar{\Phi}_i^{(n)}}\right]\right]\nonumber\\
	&+e^{j\theta_{M_1}}v_m+z_m,\label{eq:y22}
\end{align}
where $L_m^{\text{RIS}}=L_m^{\text{RISh}}L_m^{\text{RISg}}$,$\bar{\Phi}_i^{(n)}=-\phi_{i,m}^{(n)}+\Phi_i^{(n)}$, $\Phi_i^{(n)}$ corresponds to the phase adjustment of the other sub-surface $i$ for $i\in\{1, ..., M_2\}\setminus\{m\}$, and it is given by
\vspace{-0.0cm}
\begin{align}
\Phi_i^{(n)}=&\phi_{i,i}^{(n)}+\psi^{(n)}+\theta_i+\theta_{M_1}, n=1, ..., N_i.
\end{align}
The first three terms in \eqref{eq:y22} represent the amplitudes of the constructive combining, the intra-cluster, and inter-clusters interference, respectively. Furthermore, thanks to the remodulation process at the RIS, it can be noted that there is no interference from $\text{C}_1$ users' symbols received by $\text{C}_2$ users over the RIS link.\\
\indent In order to detect its PSK symbol, each user $m$ in $\text{C}_2$ is assumed to perform maximum likelihood (ML) detection in the presence of the interference coming from the other sub-surfaces ($i\neq m$) and the one comes from the BS, which are irremovable \cite{CRadio} and considered as noise. In this way, for the detection process, user $m$ in $\text{C}_2$ needs to know the overall sum of the channel gains of the $m^{th}$ sub-surface only. This, in turn, significantly reduces the training overhead and limits the spectrum efficiency degradation associated with it \cite{ch-est1}, \cite{ch-est2}. Furthermore, by properly choosing the RIS size \cite{BS-Link-Ignored}, each user $m$ in $\text{C}_2$ relies on the amplification gain provided by its own sub-surface ($i=m$) to overcome the irremovable interference. Thus, the SINR for user $m$ in $\text{C}_2$ to decoded its own message is given by 
\vspace{-0.2cm}
%%%%%%%%%%%%%%%instantaneous rate 2%%%%%
\begin{align}
	%%%%numerator%%%%%%%%%%%
	\text{SINR}_m&=\frac{A_m}
	%%%%Denomerator%%%
	{I_m+\frac{1}{\rho}},\label{eq:sinr2}
\end{align}
%%%%%%%%%%%%%%%instantaneous rate 2%%%%% 
%%%%%%%%%%%%%%%instantaneous rate 1%%%%%
where $\rho=\frac{P}{\sigma^2}$ denotes the transmit SNR, $A_m$ and $I_m$ denote the signal and total interference powers, respectively, and they are, from \eqref{eq:y22}, given as
\vspace{-0.2cm}
\begin{align}
	A_m&=\left|\sqrt{L_m^{\text{RIS}}}\sum_{n=1}^{N_m}\beta_{m,m}^{(n)}\right|^2, \label{eq:A1}\\
	I_m&=\left|I_{\text{RIS}}+v_m\right|^2,\label{eq:I1}
\end{align}
and $I_\text{RIS}$ is given by
\vspace{-0.2cm}
\begin{align}
I_{\text{RIS}}=\sqrt{L_m^{\text{RIS}}}\smash[b]{\underset{i\neq m}{\sum}}_{i=1}^{M_2-1} \left[\sum_{n=1}^{N_i}\beta_{i,m}^{(n)}e^{j\bar{\Phi}_i^{(n)}}\right]\label{eq:I_RIS}.
\end{align}
%%%%%%%%%%%%%%%instantaneous rate 2%%%%% 
From (\ref{eq:sinr2}), the instantaneous transmission rate for $\text{U}_{m, 2}$ and the sum-rate for the all $M_2$ users can be calculated, respectively, as follows:
\begin{align}
	R_m=&\log_2(1+\text{SINR}_m), \label{eq:R}
\end{align}
\begin{align}
%%%%%%%%%%%%Sum rate%%%%%%%%
	R=&\sum_{m=1}^{M_2}R_m=\sum_{m=1}^{M_2}\log_2(1+\text{SINR}_m).
\end{align}
In the following section we derive the outage probability for a given user $m$ in $\text{C}_2$.
\vspace{-0.0cm}
\begin{remark}
It can be verified from \eqref{eq:y1} and \eqref{eq:y22} that the number of users in $\text{C}_1$ and $\text{C}_2$ has no effect on the performance of the users belong to the other cluster, where the BS and RIS independently serve $\text{C}_1$ and $\text{C}_2$, respectively. Furthermore, although the fact that the users in $\text{C}_2$ still receive interference from the users in $\text{C}_1$ through the BS-$\text{U}_m$ link in addition to the sub-surfaces mutual interference, increasing the RIS size can effectively mitigate the impact of the overall interference on the users in $\text{C}_2$ \cite{BS-Link-Ignored}. Furthermore, the sum-rate gain provided by the proposed system does not require that the users in $\text{C}_2$ have different channel gains as is the case in conventional NOMA, which adds more flexibility to the system design. Finally, for the detection process, user $m$ in $\text{C}_2$ needs to know the overall sum of the channel gains of the $m^{th}$ sub-surface only, which effectively reduces the training overhead in the channel estimation of the RIS channels.
\end{remark}
\vspace{-0.5cm}
%%%%%%%%%End of System Model%%%%%%%%%%%
%%%%%%% Outage Probability%%%%%%%%%%%%%
\section{Outage Probability Analysis}\label{sec:OP}
Denoting the data rate requirement for $\text{U}_{m, 2}$ as $\gamma^*_m$, the outage probability for $\text{U}_{m, 2}$ is given as \cite{proakis}
\begin{align}\label{eq:P}
P^{out}_m&=P(R_m<\gamma_m^*)
\end{align}
By substituting (\ref{eq:R}) in (\ref{eq:P}), we obtain
\begin{align}
P^{out}_m&=P\left(\log_2\left(1+\text{SINR}_m\right)<{\gamma_m^*}\right),\nonumber\\
&=P\left(\text{SINR}_m<2^{\gamma_m^*}-1\right)\label{eq:OP-SNR}
\end{align}
Due to the spatial correlation between the RIS-$\text{U}_m$ channels, and thus, the correlation through $\mathbf{\Theta}_i$, it is challenging to derive the distribution of $\text{SINR}_m$\cite{op_deter}, \cite{op_match}\footnote{In \cite{op_deter} and \cite{op_match} a deterministic equivalent and moment matching approaches have been considered, respectively, to obtain the outage probability under spatially correlated channels. However, these works do not consider the intelligent phase shift adjustment.}. Furthermore, by considering an inter-element separation of $\lambda/2$, where $\lambda$ is the wavelength of the operating frequency, the spatially correlated channels become close to the i.i.d. case \cite{Corr}. This approximation is verified through Monte Carlo simulations in Section \ref{sec:Simu}, where the ergodic-rates and outage probability curves with/without correlation are shown to be close to each other. In what follows the outage probability expression is given under the i.i.d Rayleigh fading channels assumption and thus, the obtained results can serve as an upper bound for the performance of the cases where the inter-element separation is less than $\lambda/2$ and the RIS channels are highly spatially-correlated.
%%%%%%%%%%%%%%%%%%%%%%%%%%%%%%%%%%%%%
%%%%%%%%%% Proposition 1 %%%%%%%%%%%%%%%%%
\begin{proposition}\label{prop:Proposition 1}
The closed-form outage probability expression of user $m$ in $\text{C}_2$, assuming uncorrelated channels $\mathbf{R}_{\text{RIS}}=\mathbf{I}_N$, is given by
\begin{align}
	P^{\text{out}}_m&=1-\texttt{\textbf{\textit{Q}}}_{\frac{1}{2}}\left(\frac{\mu_1}{s_1},\sqrt{\frac{y}{s_1^2}}\right)+\left(\frac{s_2^2}{s_1^2+s_2^2}\right)^{\frac{1}{2}}\text{exp}\left(\frac{y}{2s_2^2}\right)\nonumber\\
	&\hspace{-0.2cm}	\times\text{exp}\left(-\frac{\mu_1^2}{2(s_1^2+s_2^2)}\right) 
	\texttt{\textbf{\textit{Q}}}_{\frac{1}{2}}\left(\frac{\mu_1}{s_1}\sqrt{\frac{s_2^2}{s_1^2+s_2^2}},\sqrt{\frac{y(s_1^2+s_2^2)}{s_1^2s_2^2}}\right),\label{eq:OP1}
\end{align}
where $\texttt{\textbf{\textit{Q}}}_k$ is the $k^{th}$ order generalized Marcum Q-function \cite{marcum}, $y=\frac{2^{\gamma_m^*}-1}{\rho}$,  $\mu_1=\sqrt{L_m^{\text{RIS}}}N_m\frac{\sqrt{\pi}}{2}$, $s_1^2=L^{\text{RIS}}_{m}N_m\frac{4-\pi}{4}$, and $s_2^2=0.5(2^{\gamma_m^*}-1)(L^{\text{RIS}}_{m}(N-N_m)+L_m^{\text{BS}})$.
\end{proposition}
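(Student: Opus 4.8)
The plan is to characterize the two independent random quantities entering \eqref{eq:sinr2} — the useful amplitude $\sqrt{A_m}$ and the aggregate interference power $I_m$ — and then average the conditional outage over the interference. Under the i.i.d.\ Rayleigh assumption each $\beta_{m,m}^{(n)}$ is Rayleigh distributed with mean $\sqrt{\pi}/2$ and variance $(4-\pi)/4$, so $S \triangleq \sqrt{A_m} = \sqrt{L_m^{\text{RIS}}}\sum_{n=1}^{N_m}\beta_{m,m}^{(n)}$ is a scaled sum of $N_m$ i.i.d.\ Rayleigh variables. By the central limit theorem it is approximately Gaussian, $S \sim \mathcal{N}(\mu_1, s_1^2)$, with exactly the stated $\mu_1$ and $s_1^2$. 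For the interference, the phases $\bar{\Phi}_i^{(n)}$ in \eqref{eq:I_RIS} combine the independent channel phases $\phi_{i,m}^{(n)}$ and $\phi_{i,i}^{(n)}$ and are therefore effectively uniform, making each term in $I_{\text{RIS}}$ zero-mean; the central limit theorem then renders $I_{\text{RIS}}$ a zero-mean complex Gaussian, and adding the independent $v_m\sim\mathcal{CN}(0,L_m^{\text{BS}})$ gives $I_{\text{RIS}}+v_m\sim\mathcal{CN}(0,\Omega)$ with $\Omega = L_m^{\text{RIS}}(N-N_m)+L_m^{\text{BS}}$ (using $\mathrm{E}[(\beta_{i,m}^{(n)})^2]=1$). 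Hence $I_m$ is exponential with mean $\Omega = 2s_2^2/\tau$, where $\tau \triangleq 2^{\gamma_m^*}-1$.

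Next I would rewrite \eqref{eq:OP-SNR} as $P_m^{\text{out}} = P(S^2 < \tau I_m + y)$ with $y = \tau/\rho$. Since $S$ is Gaussian, $S^2$ is (a scaling of) a noncentral chi-square variable with one degree of freedom, whose CDF is $P(S^2<t) = 1 - Q_{1/2}(\mu_1/s_1,\sqrt{t}/s_1)$ with $Q_{1/2}$ the order-$\tfrac{1}{2}$ generalized Marcum $Q$-function. Conditioning on $I_m$ and averaging over its exponential density $f_{I_m}(t)=\Omega^{-1}e^{-t/\Omega}$, then substituting $u=\tau t+y$ and $w=\sqrt{u}$ and using $\tau\Omega=2s_2^2$ so that the shifted-exponential mass normalizes to one, I obtain
\begin{align}
P_m^{\text{out}} = 1 - \frac{e^{y/(2s_2^2)}}{s_2^2}\int_{\sqrt{y}}^{\infty}\! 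Q_{1/2}\!\left(\frac{\mu_1}{s_1},\frac{w}{s_1}\right) w\, e^{-w^2/(2s_2^2)}\,dw. \nonumber
\end{align}

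The decisive step is evaluating this integral, which I would do by reducing the order-$\tfrac{1}{2}$ Marcum function to ordinary Gaussian $Q$-functions through the identity $Q_{1/2}(a,b)=\mathcal{Q}(b-a)+\mathcal{Q}(b+a)$, where $\mathcal{Q}(x)=\frac{1}{\sqrt{2\pi}}\int_x^{\infty}e^{-t^2/2}\,dt$. Each resulting term $\int_{\sqrt{y}}^{\infty}\mathcal{Q}((w\mp\mu_1)/s_1)\,w\,e^{-w^2/(2s_2^2)}\,dw$ is handled by integration by parts, differentiating the $\mathcal{Q}$-function and integrating $w\,e^{-w^2/(2s_2^2)}$. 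The boundary terms recombine, via the same identity, into $Q_{1/2}(\mu_1/s_1,\sqrt{y/s_1^2})$, which gives the first two terms of \eqref{eq:OP1}. The residual integrals carry a Gaussian kernel $\exp(-w^2/(2s_2^2)-(w\mp\mu_1)^2/(2s_1^2))$; completing the square in $w$ factors out $\exp(-\mu_1^2/(2(s_1^2+s_2^2)))$ and leaves a Gaussian tail of effective variance $s_1^2 s_2^2/(s_1^2+s_2^2)$ and mean $\pm s_2^2\mu_1/(s_1^2+s_2^2)$. Re-applying $\mathcal{Q}(b-a)+\mathcal{Q}(b+a)=Q_{1/2}(a,b)$ to the two tails reassembles the second Marcum $Q$-function with arguments $\frac{\mu_1}{s_1}\sqrt{s_2^2/(s_1^2+s_2^2)}$ and $\sqrt{y(s_1^2+s_2^2)/(s_1^2 s_2^2)}$, while the prefactors collect into $(s_2^2/(s_1^2+s_2^2))^{1/2}\exp(y/(2s_2^2))\exp(-\mu_1^2/(2(s_1^2+s_2^2)))$, reproducing \eqref{eq:OP1} exactly.

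The main obstacle is this last evaluation: keeping the completing-the-square algebra and the two sign branches organized so that the effective mean, the effective variance $s_1^2 s_2^2/(s_1^2+s_2^2)$, and the three scaling/exponential prefactors align precisely with the claimed arguments, and confirming that the boundary contributions of the integration by parts reproduce the deterministic-threshold term $1-Q_{1/2}(\mu_1/s_1,\sqrt{y/s_1^2})$. A secondary point worth stating carefully is the justification of the two central-limit approximations (for $S$ and for $I_{\text{RIS}}$) over the relevant ranges of $N_m$ and $N-N_m$, since the closed form is exact only in that Gaussian limit.
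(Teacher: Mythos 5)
Your proposal is correct, and it reaches \eqref{eq:OP1} by a genuinely different route from the paper. The paper's Appendix A rewrites the event as $P(Y<y)$ with $Y=A_m-\bar{I}_m$, $\bar{I}_m=(2^{\gamma_m^*}-1)I_m$, identifies $A_m$ (via the same CLT step you use) as a noncentral $\chi^2$ variable with one degree of freedom and $\bar{I}_m$ as a central $\chi^2$ variable with two degrees of freedom, and then simply \emph{cites} the known CDF of the difference of these two independent variables to obtain \eqref{eq:OP1}; no integral is evaluated. You instead derive that CDF from scratch: you condition on the exponential interference, write the conditional outage through the order-$\tfrac{1}{2}$ Marcum $Q$-function, and evaluate the averaging integral by the decomposition $Q_{1/2}(a,b)=\mathcal{Q}(b-a)+\mathcal{Q}(b+a)$, integration by parts, and completing the square. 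I checked the algebra: the boundary terms do reassemble into $1-Q_{1/2}(\mu_1/s_1,\sqrt{y}/s_1)$, the effective variance $s_1^2s_2^2/(s_1^2+s_2^2)$ and mean $s_2^2\mu_1/(s_1^2+s_2^2)$ give exactly the stated arguments of the second Marcum function, and the prefactors collect into $(s_2^2/(s_1^2+s_2^2))^{1/2}\exp(y/(2s_2^2))\exp(-\mu_1^2/(2(s_1^2+s_2^2)))$, so your residual term matches \eqref{eq:OP1} exactly (and correctly reduces to Corollary 1 at $y=0$). What your approach buys is self-containedness and transparency about where each factor comes from, at the cost of more computation; what the paper's approach buys is brevity by outsourcing the distribution of a noncentral-minus-central $\chi^2$ difference to a reference. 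Both arguments share the same two approximations (CLT on $\sum_n\beta_{m,m}^{(n)}$ and on $I_{\text{RIS}}$) and the same implicit independence of the signal and interference terms, so neither is more rigorous than the other on that front.
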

\begin{proof}
	See Appendix A.
\end{proof}
Proposition 1 expresses the outage probability as a function of the transmit SNR and the ratio of $\sqrt{L_m^{\text{RIS}}}N_m$ to $\sqrt{L_m^{\text{RIS}}}(N-N_m)$ plus $\sqrt{L_m^{\text{BS}}}$, where these parameters reflect the amplification gain, the sub-surfaces interference, and the BS interference powers, respectively. To get more insight, we give the following corollaries that follow from Proposition 1.
\begin{corollary}
The Asymptotic behaviour of the outage probability can be obtained, for $\rho\rightarrow \infty$ or $y\rightarrow 0$, as
\begin{align}
	P_m^{\infty}&=\left(\frac{s_2^2}{s_1^2+s_2^2}\right)^{\frac{1}{2}}\text{exp}\left(-\frac{\mu_1^2}{2(s_1^2+s_2^2)}\right), \label{eq:OP_c1}
\end{align} 
\end{corollary}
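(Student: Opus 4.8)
The plan is to derive \eqref{eq:OP_c1} directly from the exact expression \eqref{eq:OP1} by letting $y\to 0$, after first observing that the high-SNR regime is subsumed by this single limit: since $y=\frac{2^{\gamma_m^*}-1}{\rho}$ with $\gamma_m^*$ fixed, the condition $\rho\to\infty$ is equivalent to $y\to 0$. It therefore suffices to evaluate $\lim_{y\to 0}P^{\text{out}}_m$ using the closed form of Proposition~1.

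The key ingredient is the behaviour of the generalized Marcum Q-function when its second argument vanishes. From its integral representation one has the normalization $\texttt{\textbf{\textit{Q}}}_{k}(a,0)=1$ for every fixed $a\ge 0$ and $k>0$, and hence $\texttt{\textbf{\textit{Q}}}_{k}(a,b)\to 1$ as $b\to 0$ by continuity. I would apply this to both Q-function terms in \eqref{eq:OP1}. In the first term, the second argument $\sqrt{y/s_1^2}$ tends to $0$ while the first argument $\mu_1/s_1$ does not depend on $y$, so $\texttt{\textbf{\textit{Q}}}_{\frac{1}{2}}\!\left(\frac{\mu_1}{s_1},\sqrt{y/s_1^2}\right)\to 1$. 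In the second term, the second argument $\sqrt{y(s_1^2+s_2^2)/(s_1^2 s_2^2)}$ likewise tends to $0$ while its first argument stays fixed, so this Q-function also tends to $1$.

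It then remains to invoke continuity of the exponential, $\exp\!\left(\frac{y}{2s_2^2}\right)\to 1$ as $y\to 0$, and to substitute the three limits into \eqref{eq:OP1}. The leading contribution $1-\texttt{\textbf{\textit{Q}}}_{\frac{1}{2}}(\cdot,\cdot)$ collapses to $1-1=0$, and the surviving third term reduces to $\left(\frac{s_2^2}{s_1^2+s_2^2}\right)^{\frac{1}{2}}\exp\!\left(-\frac{\mu_1^2}{2(s_1^2+s_2^2)}\right)$, which is exactly \eqref{eq:OP_c1}. I do not anticipate a genuine obstacle, as every $y$-dependent quantity enters only through a second Marcum argument or through the exponential, both of which are continuous at the limit point; the only point needing a word of justification is the use of $\texttt{\textbf{\textit{Q}}}_{k}(a,0)=1$, which is legitimate precisely because the first arguments are constants in $\rho$ and so no indeterminate form is produced.
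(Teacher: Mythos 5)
Your proposal is correct and follows essentially the same route as the paper, which likewise obtains the corollary by setting $y=0$ in \eqref{eq:OP1} and invoking $\texttt{\textbf{\textit{Q}}}_{\frac{1}{2}}(a,0)=1$ for all $a$. You merely spell out the continuity arguments (both Marcum Q terms and the factor $\exp(y/(2s_2^2))$ tending to their values at $y=0$) that the paper leaves implicit.
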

\begin{proof}
The proof follows directly from Proposition 1 by letting $y=0$, where $	\texttt{\textbf{\textit{Q}}}_{\frac{1}{2}}(a,0)=1, \forall a$ \cite{marcum}.
\end{proof}
%\vspace{-0.2cm}
From Corollary 1, at high SNR, the outage probability performance improves exponentially with the ratio of the amplification gain to the overall interference associated with the RIS sub-surfaces and the BS links.
\vspace{-0.25cm}
\begin{corollary}
Consider the case where the RIS is uniformly partitioned between $M_2+1$ users in $\text{C}_2$, where $N_m=N_i=N/(M_2+1), \forall i, m$, $\sqrt{L_m^{\text{RIS}}}M_2N_m>>L_m^{BS}$, and $\rho\rightarrow \infty$, then, the outage probability is given by
\begin{align}
	P_m^{\infty}&\hspace{-0.1cm}=\hspace{-0.1cm}\left(\frac{2M_2}{2M_2+4-\pi}\right)^{\frac{1}{2}}\hspace{-0.1cm}\text{exp}\left(-\frac{\pi N_m}{2(2M_2+4-\pi)}\right),\label{eq:OP_c2} 
\end{align}	
for $M_2>>1$, \eqref{eq:OP_c2} can be simplified to
\begin{align}
P_m^{\infty}&\approx\text{exp}\left(-\frac{\pi N_m}{4M_2}\right)\label{eq:OP_c22}.
\end{align}
Thus, for a given $P_m^{\infty}$, the required $N_m$ for each user in $\text{C}_2$ can be obtained as
\begin{align}
N_m\approx\lceil -\frac{4M_2}{\pi}\ln(P_m^{\infty})\rceil.\label{eq:N}
\end{align}
\end{corollary}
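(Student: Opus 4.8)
The plan is to obtain all three displayed results by direct substitution into the asymptotic expression of Corollary~1, exploiting the cancellations that the uniform-partitioning constraint forces; I would \emph{not} restart from the outage integral. Instead I would take \eqref{eq:OP_c1} as given, $P_m^{\infty}=\bigl(s_2^2/(s_1^2+s_2^2)\bigr)^{1/2}\exp\bigl(-\mu_1^2/[2(s_1^2+s_2^2)]\bigr)$, and treat the entire argument as an algebraic simplification together with two limiting approximations.

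First I would encode the uniform partitioning. With $M_2+1$ equal sub-surfaces we have $N_m=N/(M_2+1)$, and the crucial consequence is that the number of interfering elements is $N-N_m=M_2N_m$; this is the substitution that rewrites the total size $N$ as a multiple of the per-user size $N_m$ and is what later lets a common factor cancel. I would then invoke the stated large-RIS regime to neglect the BS path gain $L_m^{\text{BS}}$ inside $s_2^2$, so that $s_2^2\approx \tfrac{1}{2}(2^{\gamma_m^*}-1)L_m^{\text{RIS}}M_2N_m$, while $\mu_1^2=\tfrac{\pi}{4}L_m^{\text{RIS}}N_m^2$ and $s_1^2=\tfrac{4-\pi}{4}L_m^{\text{RIS}}N_m$ are read directly off Proposition~1.

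The heart of the calculation is forming the prefactor ratio $s_2^2/(s_1^2+s_2^2)$ and the exponent ratio $\mu_1^2/[2(s_1^2+s_2^2)]$. In both the common factor $L_m^{\text{RIS}}N_m$ cancels between numerator and denominator, leaving $\tfrac{2M_2}{2M_2+4-\pi}$ and $\tfrac{\pi N_m}{2(2M_2+4-\pi)}$ respectively, which reproduce \eqref{eq:OP_c2}. The one point that needs care, and the step I expect to be the real obstacle, is the factor $2^{\gamma_m^*}-1$ carried inside $s_2^2$: it does not cancel on its own, and the ratios actually simplify to $\tfrac{2(2^{\gamma_m^*}-1)M_2}{2(2^{\gamma_m^*}-1)M_2+4-\pi}$ and $\tfrac{\pi N_m}{2(2(2^{\gamma_m^*}-1)M_2+4-\pi)}$. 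The form stated in \eqref{eq:OP_c2} is therefore recovered under the unit-threshold normalization $2^{\gamma_m^*}-1=1$ (i.e. $\gamma_m^*=1$), and I would state this normalization explicitly rather than let it pass silently.

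The remaining two equations are then immediate. For \eqref{eq:OP_c22} I would use $M_2\gg 1$, so that $2M_2+4-\pi\approx 2M_2$: the prefactor $\bigl(\tfrac{2M_2}{2M_2+4-\pi}\bigr)^{1/2}\to 1$ and the exponent collapses to $\tfrac{\pi N_m}{4M_2}$, giving $P_m^{\infty}\approx\exp(-\pi N_m/(4M_2))$. Finally, to size each sub-surface I would invert this relation: taking logarithms yields $N_m=-\tfrac{4M_2}{\pi}\ln(P_m^{\infty})$, and since $N_m$ must be a positive integer I would round up, which also guarantees the achieved outage does not exceed the target because $P_m^{\infty}$ is monotonically decreasing in $N_m$; this produces \eqref{eq:N}.
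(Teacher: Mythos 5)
Your proposal is correct and follows essentially the same route as the paper: substitute the uniform-partition identity $N-N_m=M_2N_m$ and the approximation $s_2^2\approx 0.5\,L_m^{\text{RIS}}M_2N_m$ (neglecting $L_m^{\text{BS}}$) into Corollary~1, with the normalization $\gamma_m^*=1$ that the paper likewise invokes, then take $M_2\gg 1$ and invert for $N_m$. Your explicit flagging of the $2^{\gamma_m^*}-1$ factor matches the paper's ``without loss of generality, $\gamma_m^*=1$ bpcu'' remark, so there is no gap.
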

\begin{proof}
The proof follows directly from Corollary 1 by letting $s_2^2=0.5(\sqrt{L_m^{\text{RIS}}}(N-N_m)+L_m^{BS})\approx 0.5\sqrt{L_m^{\text{RIS}}}M_2N_m$, where, without loss of generality, $\gamma^*_m=1$ bit per channel use (bpcu). 
\end{proof}
\vspace{-0.4cm}
Corollary 2 shows the outage probability performance at high SNR by considering only the mutual sub-surfaces interference impact. The BS link interference impact can be ignored by assuming large RIS size $N$ and/or $L_m^{\text{BS}}>>L_m^{\text{RIS}}$. It can clearly be seen from \eqref{eq:OP_c22} that the outage probability performance improves exponentially in proportion to the ratio of the sub-surface size allocated to user $m$ to the number of the other served users ($M_2$). Although increasing $N_m$ for each user, and hence increasing $N$, increases the number of the interferer signals received by each user, the outage probability still decreases exponentially for all users. This can be explained by the fact that the interference is of the incoherent type where the interferer signals are combined constructively/destructively in a random way. 
\vspace{-0.2cm}
\begin{remark}
Interestingly, the impact of the mutual sub-surfaces' interference between users can be effectively mitigated by increasing the overall RIS size for a given $M_2$ number of users in $\text{C}_2$. This is in contrast to the case of using SC, where increasing the transmit power has no effect on the mutual interference between the superposed symbols. Furthermore, it is worth noting that the outage probability in \eqref{eq:OP1} can be generalized to the case where all the links, BS-RIS, BS-$\text{U}_m$, and RIS-$\text{U}_m$ are Rician fading channels. In this case, by considering the same derivation steps provided in Appendix A, we obtain $Y$ to be the difference of two independent and non-central chi-square random variables, which has the following characteristic function (CF)\cite{cdf}
\begin{align}
	\Psi_{Y}(w)=\frac{\text{exp}\left(\frac{jw\mu_{A}^2}{1-2jw\sigma_{A}^2}\right)\text{exp}\left(\frac{-jw\mu_{I}^2}{1+2jw\sigma_{I}^2}\right)}{(1-2jw\sigma^2_{A})^{0.5}(1+2jw\sigma^2_{I})},
\end{align}	
where ($\mu_{A}$,$\sigma_{A}$) and ($\mu_{I}$, $\sigma_{I}$) are the mean and standard deviation pairs of $A_m$ and $\bar{I}_m$, respectively. Thus, by using Gil-Pelaez’s inversion formula, $P_m^{out}$ can be obtained as follows\cite{Gil-cdf}
\begin{align}
	P(Y<y)&=\frac{1}{2}-\int_{0}^{\infty}\frac{\Im\{e^{-jwy}\Psi_{Y}(w)\}}{w\pi}dw,\label{eq:Gaz}
\end{align}
	where the integration needs to be evaluated numerically with a suitable upper limit to avoid errors associated with the numerical calculations.
\end{remark}
%%%%%%%%%End of  Outage Probability%%%%%%%%%%%
%%%%%%%%%%%%%%%%%%%%%%%%%%%%%%%
%%%%%%%%%%%%%%%%%%%%%%%%%
 %%%%%%%%%%%%%%Partitioning Approach%%%%%%%%%%%%%%%%%
\section{RIS Partitioning Approach and Special Cases}\label{sec:SpltAprh}
In this section, we provide the approach we used to partition the RIS among $\text{C}_2$ users in order to maximize user fairness. Since each sub-surface of the RIS is allocated to serve a different user, each user $m$ in $\text{C}_2$ receives a huge number of interferer signals from the other sub-surfaces allocated to the other users. Therefore, a proper number of reflecting elements ($N_m$) needs to be allocated for each user in order to guarantee maximum user fairness among them. In order to achieve this goal, we formulate an optimization problem where the Jain's fairness index \cite{Jain} is the objective function to be maximized and $N_1, ..., N_m, ..., N_{M_2}$ are the decision variables to be determined, as follows:
%%%%%%%%%%%%%%%%%%% (P1) %%%%%%%%%%%%%%%%%%
\begin{align}
	\hspace{-.0cm}(\text{P1}):\;\;&\smash{\underset{N_1, ..., N_m, ...,  N_{M_2}}{\text{max}}}\;\;\;\;\;\;\frac{(\frac{1}{M_2}\sum_{m=1}^{M_2}\bar{R}_m)^2}{\frac{1}{M_2}\sum_{m=1}^{M_2}\bar{R}_m^2},\tag{20a}\\
	&\hspace{-1cm}\text{s.t.} &\hspace{-6.5cm}\sum_{m=1}^{M_2}N_m=N,\;N_m\in\{1, ..., N-(M_2-1)\}, \forall m, \tag{20b}
\end{align}
where $\bar{R}_m= \mathbb{E}[R_m]$ is the ergodic rate of $\text{U}_{m,2}$, $\mathbb{E}[\cdot]$ is the statistical expectation operator over all the random channel realizations. 

(P1) is NLIP problem with a non-convex feasible set represented by the constraint (20b) that has combinatorial growth as $N$ increases. This makes (P1) a non-convex combinatorial optimization problem which is very challenging to solve if we consider the fact that a LIP is generally a non-deterministic polynomial-time (NP) hard problem \cite{Integer-NP}. IP optimization problems are usually solved by using branch-and-bound with the relaxation of the integrality constraint on the decision variables \cite{branchbound}, which is, in this case, invalid for $N_m$, the number of RIS elements allocated to $\text{U}_{m,2}$. On the other side, considering the exhaustive search solution, the complexity lies in the explosively large size of the feasible set represented by the constraint (20b) which has ${N-1 \choose M_2-1}$ feasible points. This implies that the required searching loops to scan all the feasible points are at a complexity level of $\mathcal{O}((N-1)^{\min(M_2-1, N-M_2)})$, or $\mathcal{O}((N-1)^{M_2-1})$ if we considered the fact that $N>>M_2$. Nevertheless, in order to shrink the size of the feasible set and thus, make the exhaustive search a practical method to solve (P1), we exploit the structure of the problem, specifically the nature of the transmission over the RIS, to bound the feasible set in (20b), as follows.

First, the number of RIS elements ($N_m$) allocated to any user $m$ in $\text{C}_2$ cannot be randomly small, otherwise, the irremovable interference power from the other sub-surfaces overwhelms the amplification power from the $ m^{th}$ sub-surface. Motivated by the so-called interference temperature \cite{Itemp}, a signal-to-interference power constraint (SIPC)\cite{CRadio} needs to be considered to protect each user $\text{U}_{m,2}$ from the interference belong to the other users. Based on the SIPC threshold, there is a minimum number of RIS elements $N_{thr}$ that can be allocated for any  user $m$ in $\text{C}_2$ to protect it from the interference power associated with the remaining part of the RIS ($N-N_{thr}$). This means that the feasible set in (20b) needs to have $N_{thr}$ (rather than unity) as the minimum element in the set, which in turn shrinks the size of the feasible set to ${N-N_{thr}-1 \choose M_2-1}$ feasible points. Second, instead of considering a step size of one between any two successive elements in the feasible set of (20b), an adjustable step size $b$ is considered, which reduces the feasible set further to ${\frac{N-N_{thr}}{b}-1 \choose M_2-1}$ feasible points, where $\frac{N-N_{thr}}{b}$ needs to be an integer. Third, as in the power allocation of classical PD-NOMA, more RIS elements need to be allocated to the user with the weakest RIS-$\text{U}_{m, 2}$ channel gain. Hence, the users need to be ordered according to their distances from the RIS, where $\text{U}_{m, 2}$ is the $ m^{th}$ farthest user from the RIS and thus, the user with the $ m^{th}$ weakest RIS-$\text{U}_{m, 2}$ channel gain. By considering the previously mentioned three
bounding modifications on the feasible set in (20b), (P1) can be reformulated to obtain the following new optimization problem:
\vspace{-0.1cm}
%%%%%%%%%%%%%%%%%%% (P1.1) %%%%%%%%%%%%%%%%%%
\begin{align}
	\hspace{-.1cm}(\text{P1.1}):\;\;&\smash{\underset{N_1, ..., N_m, ...,  N_{M_2}}{\text{max}}}\;\;\;\;\;\;\frac{(\frac{1}{M_2}\sum_{m=1}^{M_2}\bar{R}_m)^2}{\frac{1}{M_2}\sum_{m=1}^{M_2}\bar{R}_m^2},\tag{21a}\\
	%%%%%%s S . T .%%%%%%%%%%%%%%%%%
	&\hspace{-1cm}\text{s.t.}
	%%%%%%%% Constraints%%%%%%%%%%%%%%%
	\hspace{0cm}\sum_{m=1}^{M_2}N_m=N,\;N_m\in\mathcal{N}, \;\forall m, \text{where}\; \mathcal{N}=\{N_{thr}, N_{thr}\nonumber\\
	&+1b, N_{thr}
	+2b,..., N-N_{thr}(M_2-1)\}, \mathcal{N}\subset\mathbb{Z}^+, \tag{21b}\\
	%%%%%%%%%%%%%
	&\hspace{0cm} N_1\geq N_2...\geq N_{M_2}.\tag{21c}
\end{align}
By considering (P1.1), it can be observed that the combinatorial growth of the feasible set of (P1) is significantly limited by the three modifications considered on the constraint (20b), which in turn effectively reduces the search space. In what follows, we formulate two new optimization problems to find the proper $N_{thr}$ and $b$ for (P1.1):
%%%%%%%%%%%%%%%%%%% (P1.1.1) %%%%%%%%%%%%%%%%%%
\begin{align}
	\hspace{-0.2cm}(\text{P1.1.1}):\;\;&\text{min}\;\;\;\;N_{thr},\tag{22a}\\
	&\hspace{-1.0cm}\text{s.t.}
	\;\;P(A_m|_{N_m=N_{thr}}\leq q\left|I_{\text{RIS}}|_{M_2=2, i\neq m}\right|^2)\leq \epsilon \tag{22b}\\
	%%%Explicit variables%%%%%%%
	%&\hspace{-5cm}P\left(\left|\sqrt{L_1^{\text{RIS}}}\sum_{n=1}^{N_{thr}} \beta_{m,m}^{(n)}\right|^2<q\left|\sqrt{L_1^{\text{RIS}}}\left[\sum_{n=1}^{N-N_{thr}}\beta_{i,m}^{(n)}e^{j(\phi_{m,m}^{(n)}+\psi^{(n)}+\theta_m)}\right]\right|^%2\right)\leq \epsilon, \tag{20b}\\
	%%%Explicit variables%%%%%%%
	&\hspace{-1.1cm}\;\;\;\;\;\;\; M_2N_{thr}\leq N. \tag{22c}
\end{align}
Here, the motivation behind the constraint (22b) can be explained by the fact that from each user $\text{U}_{m,2}$ perspective, the RIS appears to be partitioned into two parts, the first part (with $N_{thr}$ elements) that amplifies the signal of $\text{U}_{m,2}$, and the second part (with $N-N_{thr}$ elements) where the interference associated with the other users comes from. Thus, regardless of the noise and the BS-$\text{U}_{m,2}$ interference, the constraint (22b) ensures that for any user  $\text{U}_{m,2}$ with $N_{thr}$ RIS elements allocation, there is a low probability $\epsilon<<1$ that the interference power associated with the $N-N_{thr}$ part can be higher (by a factor of $q$) than the amplification power associated with the $N_{thr}$ part. The two parameters $\epsilon$ and $q$ need to be adjusted according to the size of the RIS and the number of users in $\text{C}_2$.  Furthermore, the probability given in (22b) can be obtained from \eqref{eq:OP_c1} by letting $L_m^{\text{BS}}=0$ and $\gamma_m^*=1$, as it is illustrated in Algorithm 1. The constraint (22c) ensures that for a given $N$, $N_{thr}$ needs to have a small enough value such that all the $M_2$ users can have (at least) the same allocation of $N_{thr}$ RIS elements, otherwise, $N$ need to be increased. In what follows we formulate an optimization problem to find the proper step size $b$ for (P1.1).
%%%%%%%%%%%% (P1.1.2) %%%%%%%%%%%%%%%%%%
\begin{align}
	\hspace{-.1cm}(\text{P1.1.2}):\hspace{0.5cm}\text{max}&\hspace{0.5cm}b,\tag{23a}\\
	\hspace{0.3cm}\text{s.t.}& \hspace{0.4cm}\bar{R}_m^{(N_{thr}+b)}-\bar{R}_m^{(N_{thr})}\leq \bar{r}, \tag{23b}\\
	&\hspace{0.45cm}b\leq N-M_2N_{thr},\tag{23c}
\end{align}
%%%setting eq. counter%%%%%%%%%%
\setcounter{equation}{23}
\hspace{0.17cm}where $\bar{R}_m^{(N_{thr}+b)}$ and $\bar{R}_m^{(N_{thr})}$ are obtained from \eqref{eq:R} with simple modifications, as illustrated in Algorithm 2.

After determining $N_{thr}$ in (P1.1.1), in (P1.1.2) we aim to find the maximum increment $b$  that can be added to $N_{thr}$ to get, accordingly, an ergodic rate increment upper-bounded by $\bar{r}$ b/s/Hz. Thus, $b$ corresponds to the step size of the exhaustive search that ensures a maximum of $\bar{r}$ ergodic-rate resolution. In (23b), due to the fact that the users in $\text{C}_2$ experience different SNR values and different BS-$\text{U}_{m, 2}$ interference power levels, the ergodic rate difference is calculated in the absence of the noise and the BS-$\text{U}_{m, 2}$ interference effects. In this way, $b$ determined from (P1.1.2) ensures that in the presence of noise and BS-$\text{U}_{m, 2}$ interference, the search resolution to solve (P1.1) does not exceed $\bar{r}$, for all users. Hence, $\bar{r}$ needs to be adjusted according to the RIS size and number of users in $\text{C}_2$.
\vspace{-0.2cm}
%%%%%%%%%%%%%%%%%%%%%%%%%%%%%%%%%
%%%%%%%%%%%%%%%%%%%%%%%%%%%%%%%
%%%%%%%%%%%%%%%%%%%%%%%%%%%%%%%%
 \subsection{RIS Partitioning Algorithms}\label{sec:SpltAlg}
Note that Algorithm 3 is the main algorithm to obtain the RIS sub-optimum partition $N_1, N_2,..., N_{M_2}$. Nevertheless, Algorithms 1, 2, and 3 need to be applied in order, where Algorithm 1 is used first to obtain $N_{thr}$, which is the input of Algorithm 2. In the same way, Algorithm 2 is used to obtain $b$, then, $N_{thr}$ and $b$ are used as inputs to Algorithm 3, which can  be summarized as follows. First, the set $\mathcal{S}$ is constructed according to the constraints (21b) and (21c). The elements of $\mathcal{S}$ are the possible partitions the RIS can be  partitioned into. When $b=1$ is used in (21b), $\mathcal{S}$ contains all the possible partitions and the solution of the algorithm is a globally optimal solution, otherwise, for $b>1$, the solution is a sub-optimal one. Second, for each partition $s$ in $\mathcal{S}$, the ergodic transmission rate $\bar{R}_m$ for the all $M_2$ users are calculated and stored in the set $\mathcal{R}^{(s)}$. Third, for each partition $s$, Jain's index is calculated  from $\mathcal{R}^{(s)}$ and stored in the set $\mathcal{J}$. Finally, the partition $j^*$ associated with the maximum Jain's index is chosen as the optimum partition, and the set $\tilde{\mathcal{S}}^{(j^*)}$ contains the optimum number of RIS elements $N_m^*$ needs to be allocated to each user $m$ in $\text{C}_2$.

The convergence of the three algorithms is guaranteed as all algorithms have a predetermined finite number of iterations. Specifically, Algorithms 1 and 2 have $\bar{I}_1=N/M_2$ and $\bar{I}_2=N-M_2N_{thr}$ maximum number of iterations, respectively. Likewise, Algorithm 3 has a maximum upper bound number of iterations $\bar{I}_3=2{\frac{N-N_{thr}}{b}-1 \choose M_2-1}$, which can be verified from the constraints (21b) and (21c) of (P1.1).
%%%%%%%%%%%%%%%% Alg. 1: Find N_thr %%%%%%%%%%%%%%%%%
\begin{algorithm}
	%\begin{scriptsize}
	\caption{Solves (P1.1.1) to find $N_{thr}$.}
	\begin{algorithmic}[1]
		\REQUIRE $L_m^{\text{RIS}}$, $M_2$, $N$, $q$, $\epsilon$.
		\STATE Initialize $N_{thr}=1, m=1, i=2$.
		\STATE \textbf{repeat}
		\STATE $\mu_m=\sqrt{L_m^{\text{RIS}}}N_{thr}\frac{\sqrt{\pi}}{2}, s_m^2=L_m^{\text{RIS}}N_{thr}\frac{4-\pi}{4}$,  $s_i^2=0.5L_m^{\text{RIS}}q(N-N_{thr})$.
		\STATE The probability in (22b) can be obtained from \eqref{eq:OP_c1}:\\ $P_m^{\infty} 	=\left(\frac{s_i^2}{s_m^2+s_i^2}\right)^{\frac{1}{2}}\text{exp}\left(-\frac{\mu_m^2}{2(s_m^2+s_i^2)}\right).$
		\STATE $N_{thr}=N_{thr}+1$.
		\STATE \textbf{while} $P_m^{\infty}<=\epsilon$ and $M_2N_{thr}\leq N$.
		\RETURN $N_{thr}=N_{thr}-1$.
	\end{algorithmic}
	% \end{scriptsize}
\end{algorithm}
%%%%%%%%%%%%%%%%%%%%%%%%
%%%%%%First case%%%%%%%%
%%%%%%%%%%%%%%%% Alg. 2: Find the step size b %%%%%%%%%%%%%%%%%
\begin{algorithm}
	%\begin{scriptsize}
	\caption{Solves (P1.1.2) to find the step size $b$.}
	\begin{algorithmic}[1]
		\REQUIRE $N$, $N_{thr}$, $\bar{r}$, $\beta_{m,m}^{(n)}$, $\beta_{i,m}^{(n)}$, $\bar{\Phi}_i^{n}$, $\forall n$, and for any $ m$ and $i$ such that $i\neq m$.
		\STATE Initialize $b=1$.
		\STATE $\bar{R}_m^{(N_{thr})}=\mathbb{E}\left[\log_2\left(1+\frac{A_m|_{N_m=N_{thr}}}
		{|I_{\text{RIS}}|_{N_i=N-N_{thr}}|^2}\right)\right]$. Perform the expectation over $10^4$ random channel realizations \cite{Ergodic}.
		\STATE \textbf{repeat}
		\STATE $\bar{R}_m^{(N_{thr}+b)}=\mathbb{E}\left[\log_2\left(1+\frac{A_m|_{N_m=N_{thr}+b}}
		{|I_{\text{RIS}}|_{N_i=N-N_{thr}-b}|^2}\right)\right]$.
		\STATE $\bar{r}_{diff}=\bar{R}_m^{(N_{thr}+b)}-\bar{R}_m^{(N_{thr})}$.
		\STATE $b=b+1$.
		\STATE \textbf{while} $\bar{r}_{diff}\leq\bar{r}$ and $b\leq N-M_2N_{thr}$.
		\RETURN $b=b-1$.
	\end{algorithmic}
	% \end{scriptsize}
\end{algorithm}
%%%%%%%%%%
%%%%%%%%%%%Algorithm 3: RIS Partitioning%%%%%%%%%%%%%%%%%%%
\begin{algorithm}
	%\begin{scriptsize}
	\caption{RIS partitioning algorithm to solve (P1.1) }
	\begin{algorithmic}[1]
		\REQUIRE $b$, $N_{thr}$, $N$,  $M_2$, $L^{\text{RIS}}_{m}$, $\rho$, $v_m$, $\beta_{i,m}^{(n)}$, $\phi_{i,m}^{(n)}$, $\bar{\Phi}_i^{n}$, $\forall i, n, m$. \\
		\STATE Construct the set $\mathcal{N}$ in the constraint (21b).
		\STATE According to the constraint (21b), construct the set $\mathcal{S}$ by finding all the solutions of $\sum_{m=1}^{M_2}N_m=N$, and excluding the ones that do not satisfy (21c). Thus, $\mathcal{S}=\{\tilde{\mathcal{S}}^{(1)}, ..., \tilde{\mathcal{S}}^{(s)}, ...\tilde{\mathcal{S}}^{(|\mathcal{S}|)}\}$, where $\tilde{\mathcal{S}}^{(s)}=\{N_1^{(s)}, ...,  N_m^{(s)}, ...N_M^{(s)}\}$ corresponds to the RIS partition $s$,  $\tilde{\mathcal{S}}^{(s)}\subset \mathcal{N}$, $\forall s$. \\
		\STATE  Construct the new sets $\mathcal{R}^{(1)}, ..., \mathcal{R}^{(s)}, ..., \mathcal{R}^{(|\mathcal{S}|)}$, where  $\mathcal{R}^{(s)}=\{\emptyset\}$, $\forall s$.	\\   
		\FOR{$s=1:|\mathcal{S}|$}
		\FOR{$ m=1:M_2$}
		\STATE $	R_m=\mathbb{E}\left[\log_2\left(1+\frac{A_m}{I_m+\frac{1}{\rho}}\right)\right]$, where $N_i,\;N_m\in \tilde{\mathcal{S}}^{(s)}$, $\forall i,\;m$. Perform the expectation over $10^4$ random channel realizations.
		\STATE $\mathcal{R}^{(s)}=\mathcal{R}^{(s)}\cup \{R_m\}$.
		\ENDFOR
		\ENDFOR
		\STATE Construct a new set  $\mathcal{J}=\{\emptyset\}$.\\
		\FOR{$s=1:|\mathcal{S}|$}
		\STATE $J=\frac{(\frac{1}{M_2}\sum_{m=1}^{M_2}R_m)^2}{\frac{1}{M_2}\sum_{m=1}^{M_2}R_m^2}$, where $R_m\in\mathcal{R}^{(s)}, \forall m$.
		\STATE $\mathcal{J}=\mathcal{J}\cup \{J\} $.
		\ENDFOR
		\STATE $j^*=\smash[b]{\underset{j=1:|\mathcal{J}|}{\text{arg max}}}\;\mathcal{J}^{(j)}$
		\vspace{0.5cm}
		\RETURN $\tilde{\mathcal{S}}^{(j^*)}=\{N_1^*, N_2^*, ..., N_m^*, ..., N_{M_2}^*\}$.
	\end{algorithmic}
	% \end{scriptsize}
\end{algorithm}
%%%%%%%%%%%%%%%%%%%%%%%%%%%%%%%%%%%%%%%%%%
%%%%%%%%%%%%%%%%%%%%%%%%%%%%%%%%%%%%
\hspace{0.18cm}From the convergence analysis above, the computational complexity for Algorithm $i$, $i\in\{1,2,3\}$, can be given as $\mathcal{O}(\bar{C}_i\bar{I}_i)$, where $\bar{C}_i$ is the computational complexity of the functions inside the loops  and $\bar{I}_i$ is the number of iterations for all loops, which is given above for each algorithm. By considering the required number of complex multiplications (CMs) as a metric, we obtain $N^2+N$ as the number of CMs required to construct $A_m$ and $I_m$/$I_\text{RIS}$ in Algorithms 1 and 2, where we considered the vector-matrix multiplication form given in $\eqref{eqSplit1}$. Consequently, for Algorithm 2, we obtain $\bar{I}_2(N^2+N)=(N-M_2N_{thr})(N^2+N)$ required number of CMs, hence, a complexity level of $\mathcal{O}_2(N^3)$. Likewise, for Algorithm 3, we have $0.5\bar{I}_3(N^2+N)={\frac{N-N_{thr}}{b}-1 \choose M_2-1}(N^2+N)$, hence, a complexity level of $\mathcal{O}_2(N^2{\frac{N-N_{thr}}{b}-1 \choose M_2-1})$, where we considered only the first loop that requires CMs. For Algorithm 1, which has no CMs, the complexity level associated with $\bar{C}_1$ depends on the type of the used algorithm.

\begin{remark}
Note that the partitioning process  (Algorithms 1, 2, and 3) needs to be updated only when the RIS size $N$, transmit power $P$, number of users $M_2$ in $\text{C}_2$, or the distances of users from the RIS/BS changes. This significantly reduces the computational complexity cost associated with performing these algorithms as the mentioned parameters slowly change with time. Furthermore, a minimum data rate requirement for all or individual users can be straightforwardly included in the algorithms. Particularly, such a constraint can be included in Algorithm 1 to replace the constraint (22b).
\end{remark}
\vspace{-0.4cm}
\subsection{Special Cases with Different Number of Users in Clusters}\label{sec:Spec}
Here, in addition to the general case introduced in Section \ref{sec:Main}, we consider the system performance analysis for particular cases in order to shed some light on the performance of the proposed scheme under different settings, as follows.\\
\indent i) \textit{All users are located in} $\text{C}_2$: In this scenario $M_1=0$, and the same transmission mechanism described in Section \ref{sec:Main} is used here with the following single modification. Since there are no users in $\text{C}_1$, the BS is assumed to transmit $x=\sqrt{P}\bar{x}$, where $\bar{x}$ is a predetermined symbol. Thus, assuming $v_{m}$ is known at the receiver side, the signal $v_m\bar{x}$ received by each user $m$ in $\text{C}_2$ over the BS-$\text{U}_{m,2}$ link can be removed readily. Furthermore, over the RIS-$\text{U}_{m,2}$ link, each sub-surface $i=m$ of the RIS remodulates $\bar{x}$ to reflect the PSK symbol to be transmitted to $\text{U}_{m,2}$, as described in Section \ref{sec:Main}. Without loss of generality, an unmodulated carrier signal can be sent from the BS and, in this case, the BS can be compensated by a single RF signal generator (SG), which simplifies the transmitter architecture. Furthermore, $R_m$ and $P_m^{\text{out}}$ can be obtained from \eqref{eq:R} and \eqref{eq:OP1}, respectively, by letting $v_m=0$ ($L_m^{\text{BS}}=0$). Likewise, for the asymptotic behaviour at high SNR, $P_{\text{out}}^{\infty}$ can be directly obtained from \eqref{eq:OP_c2} and \eqref{eq:OP_c22} by considering the uniform partitioning of the RIS elements to guarantee maximum user fairness.\\ 
%%%%%%%Second Case%%%%%%%%%%%%%%%%%%%%%
\indent ii) \textit{Each cluster has a single user}: In this scenario, $\text{U}_{1,1}$ and $\text{U}_{1,2}$ are the only users in $\text{C}_1$ and $\text{C}_2$, respectively. Therefore, the BS transmits $x=\sqrt{P}x_1$, and $\text{U}_{1,1}$ experience the same performance in a SISO system. On the other side, the RIS (as a single unit) remodulates $x$ and reflects the PSK symbol to be transmitted to $\text{U}_{1,2}$, as described in Section \ref{sec:Main}. Since the whole surface is allocated for $\text{U}_{1,2}$, $R_1$, $P^{\text{out}}_m$, and  $P_m^{\infty}$ can be obtained from \eqref{eq:R}, \eqref{eq:OP1}, and \eqref{eq:OP_c1}, respectively, by omitting the sub-surfaces interference term $I_{\text{RIS}}=0$ and letting $N_1=N$. 
\vspace{-0.1cm}
%%%%%%%%%%%%%%%%%%%%%%%%
%%%%%%%%%%%Simulations%%%%%%%%%%%%%%%%%%%%%%
\section{Simulation Results}\label{sec:Simu}
In this section, we provide computer simulation results for the proposed scheme against the following four benchmark schemes: the time division multiple access (TDMA) as an OMA scheme, where the BS serves, with full power $P$, each user in its time slot with/without the RIS. PD-NOMA scheme, where the SC message is constructed as in \eqref{eq:x}. Finally, the RIS-assisted SISO NOMA scheme proposed in \cite{Benchmark}, \cite{Bench2}\footnote{This work does not consider the spatial correlation for the RIS channels in the obtained power allocation solution. Nevertheless, spatially correlated channels are used in our simulations of this benchmark scheme.}. For TDMA and classical NOMA schemes, in order to achieve maximum user fairness, we optimize the time and power allocation, respectively, at the transmitter side using exhaustive search. In order to guarantee a fair comparison, we consider the maximum fairness between users as the common threshold for all schemes, while the comparison lies in the ergodic transmission rates and outage probability performance. In what follows, we describe the path loss models and other simulation parameters.\\
\indent For the BS-$\text{U}_{m,c}$ link, where $c$ denotes the cluster number, we obtain the path loss as $	(L^{\text{\text{BS}}}_{m,c})^{-1}=C_0\left(d_{m,c}/D_0\right)^{\alpha} \label{eq:pathL1}$ \cite{RIS-Reflct}, where $C_0=-30$ dB is the path gain at the reference distance $D_0=1\;$ m, $d_{m,c}$ is the BS-$\text{U}_{m,c}$ distance, and $\alpha=-3.5$ is the path gain exponent. For the BS-RIS-$\text{U}_{m,2}$ link, we consider that the RIS is located in the far-field with respect to the BS by ensuring that $r_s=\lceil \frac{N\lambda}{2}\rceil$ \cite{Expermintal-RIS}, thus, we obtain the path loss (with maximum gain RIS elements) as $(L^{\text{RIS}}_m)^{-1}=\lambda^4/(256\pi^2r_s^2r_{m}^2)\label{eq:pathL2}$ \cite{Ellingson}, where $r_s$ and $r_{m}$ are the BS-RIS (the center point) and the RIS (the center point)-$\text{U}_{m,2}$ distances, all in meters, respectively. Due to the small spacing distances between the RIS elements compared to $r_s$ and $r_m$, we consider $r_s$ and $r_m$ for the path loss calculations, consequently, the path losses for all the RIS elements are considered the same. In Table I, we give the users' distances from the BS and RIS, for different deployment scenarios, which are the used distances in our simulations, unless stated otherwise. The entries of the spatial correlation matrix are given as $[\mathbf{R}_{\text{RIS}}]_{n,\tilde{n}}=\text{sinc}\left(2\norm{\mathbf{u}_n-\mathbf{u}_{\tilde{n}}}/\lambda\right),\; n,\tilde{n}=1,...,N$ \cite{Corr},
where $\lambda$ is calculated for a $1.8$ Ghz operating frequency, $\text{sinc}(a) = \text{sin}(\pi a)/(\pi a)$ is the sinc function, $\mathbf{u}_n=[0,\;i(n)d_l,\tilde{i}(n)d_w\;]^T$, where $i(n)$, $\tilde{i}(n)$, $d_l$, and $d_w$ are the horizontal index, vertical index, length, and width of the element $n$, respectively, 
$i(n)=\text{mod}(n-1,N_H)$, $\tilde{i}(n)=\left\lfloor(n-1)/N_H\right\rfloor$. Furthermore, the noise power is assumed to be fixed and identical for all users in $\text{C}_1$ and $\text{C}_2$, $\sigma^2=-90$ dBm, $\forall m$, and the simulations are performed with $10^4$ random channel realizations.\\ 
 \indent Fig. \ref{fig:AlgPerf} illustrates the performance of the sub-optimal solution to (P1), which is represented by Algorithms 1, 2, and 3 to solve (P1.1) (the reformulated version of (P1)). 
 %%%%%%%SuOpt-Alg, performance %%%%%%%%%%
 \begin{figure}[t!]
 	\begin{center}
 		\includegraphics[width=60mm, height=40mm]{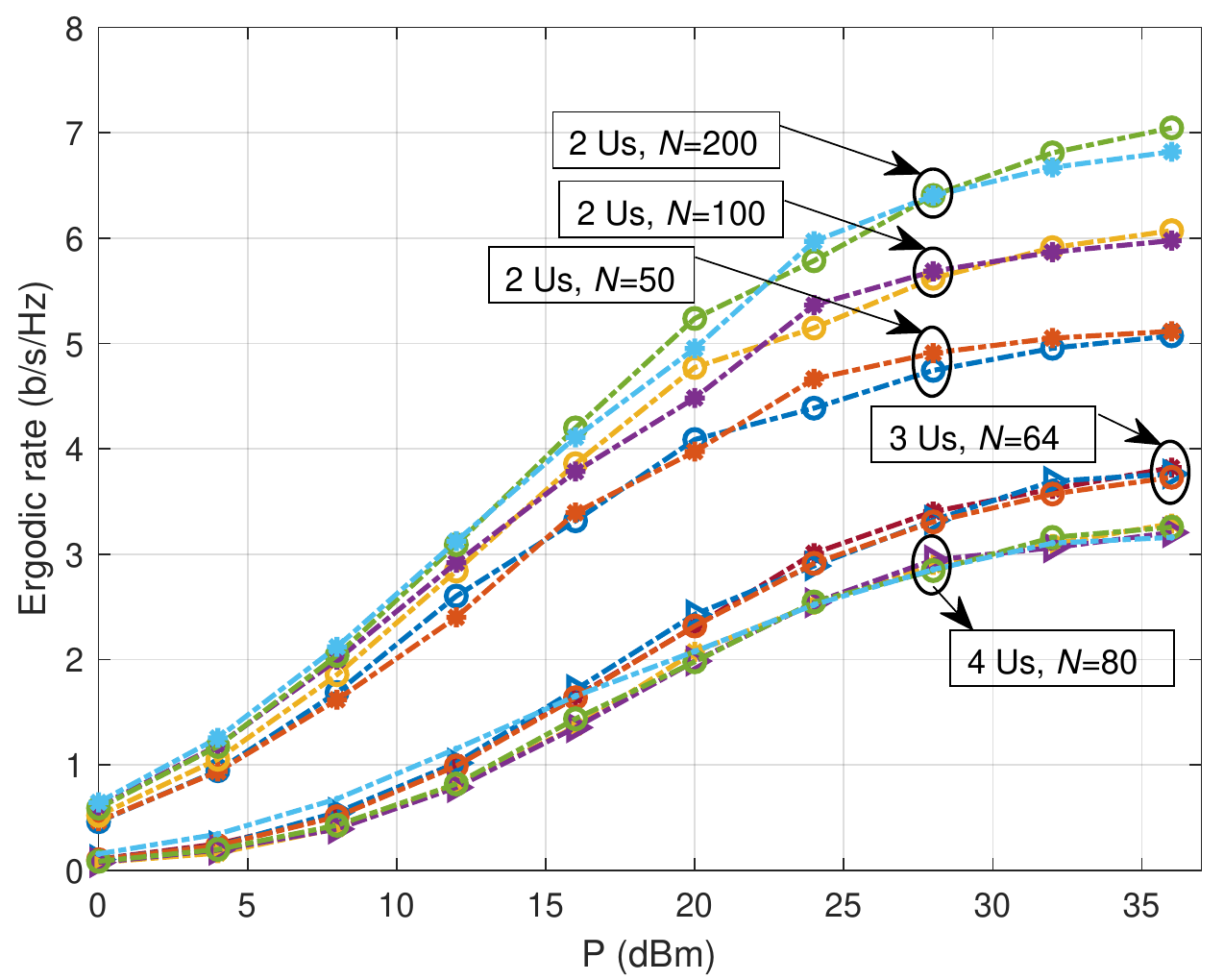}
 		\caption{The  performance of the RIS partitioning algorithms with different $N$ and $M_2$ values.}		\label{fig:AlgPerf}
 		\vspace{-0.5cm}
 	\end{center}
 \end{figure}
 \begin{table}[t]
 	%\vspace{-1.5cm}
 	\caption{Users' distances for different deployment scenarios.}
 	\label{tab:table-name}
 	\vspace{-0.3cm}
 	\begin{center}		
 		\begin{tabular}{ | m{7.7em} | m{3.8em}| m{3.9em}| m{3.9em}| m{3.9em}| } 
 			\hline
 			\textit{Users' deployment} & $d_{1,1}$, $r_{1,1}$  & $d_{2,1}$, $r_{2,1}$ & $d_{1,2}$, $r_{1,2}$ & $d_{2,2}$, $r_{2,2}$\\ 
 			\hline
 			$M_1=0$, $M_2=2$ & - & - & $150$, $146$ & $100$, $104$\\ 
 			\hline
 			$M_1=1$, $M_2=1$ & $150$, $146$ & - & $100$, $104$ & -\\ 
 			\hline
 			$M_1=2$, $M_2=2$ & $150$, $154$ & $100$, $104$ & $250$,  $254$ & $200$, $204$\\ 
 			\hline
 		\end{tabular}
 	\end{center}
 	\vspace{-0.5cm}
 \end{table}
 %%%%%%%%%%%%%%%%%%%%%%%%%%%%%%%%%%%%%%
 %%%%%%%%%%%%%%%%%%%%%%%
 \begin{table}[t!]
 	\caption{RIS partitioning algoritms: input parameters and outcomes.}	\label{tab:states}
 	\vspace{-0.3cm}
 	\begin{center}
 		\begin{tabular}{| m{2.5em} | m{5.5em}| m{3em} | m{2.47em} | }
 			\hline
 			$N, M_2$ & $N_{thr}$, $q$, $\epsilon$ & $b$, $\bar{r}$ & $|\mathcal{S}|/|\tilde{\mathcal{S}}|$ \\ 
 			\hline
 			$50, 2$ & $15$, $1.5$, $0.1$ & $2$, $0.3$ & $5/49$\\
 			\hline  
 			$64, 3$ & $14$, $1$, $0.1$ & $1$, $0.1$ &$40/210$ \\ 
 			\hline
 			$80, 4$ & $15$, $1$, $0.1$ & $1$, $0.1$ &$64/969$ \\ 
 			\hline
 			$100, 2$ & $21$, $1.5$, $0.1$ & $5$, $0.5$ & $5/99$ \\  
 			\hline
 			$200, 2$ & $32$, $1.5$, $0.05$ & $10$, $0.7$ &$6/199$ \\  
 			\hline
 		\end{tabular}
 	\end{center}
 	\vspace{-0.6cm}
 \end{table}\hspace{-0.15cm}
 Here, for each given $P$ value, we consider multiple users in $\text{C}_2$ where the RIS needs to be partitioned among them to maximize the user fairness in terms of the ergodic-rate. It can be seen that the proposed algorithms achieve a result close to the perfect user fairness between the users with different $N$ values. Furthermore, Table II shows that Algorithms 1 and 2 reduce the search space for Algorithm 3 significantly, where $|\mathcal{S}|/|\tilde{\mathcal{S}}|$ denote the number of possible partitions provided by the sub-optimal/optimal solution of Algorithm 3. This makes the exhaustive search a practical solution to solve (P1) and find the proper RIS partition.\\
 	%%%%%%%%%%%%%%%%%% Case1: Proof, OP%%%%%%%%%%%%
 \begin{figure}[t!]
 	\begin{center}
 		\subfloat[]{\includegraphics[width=45mm,height=45mm]{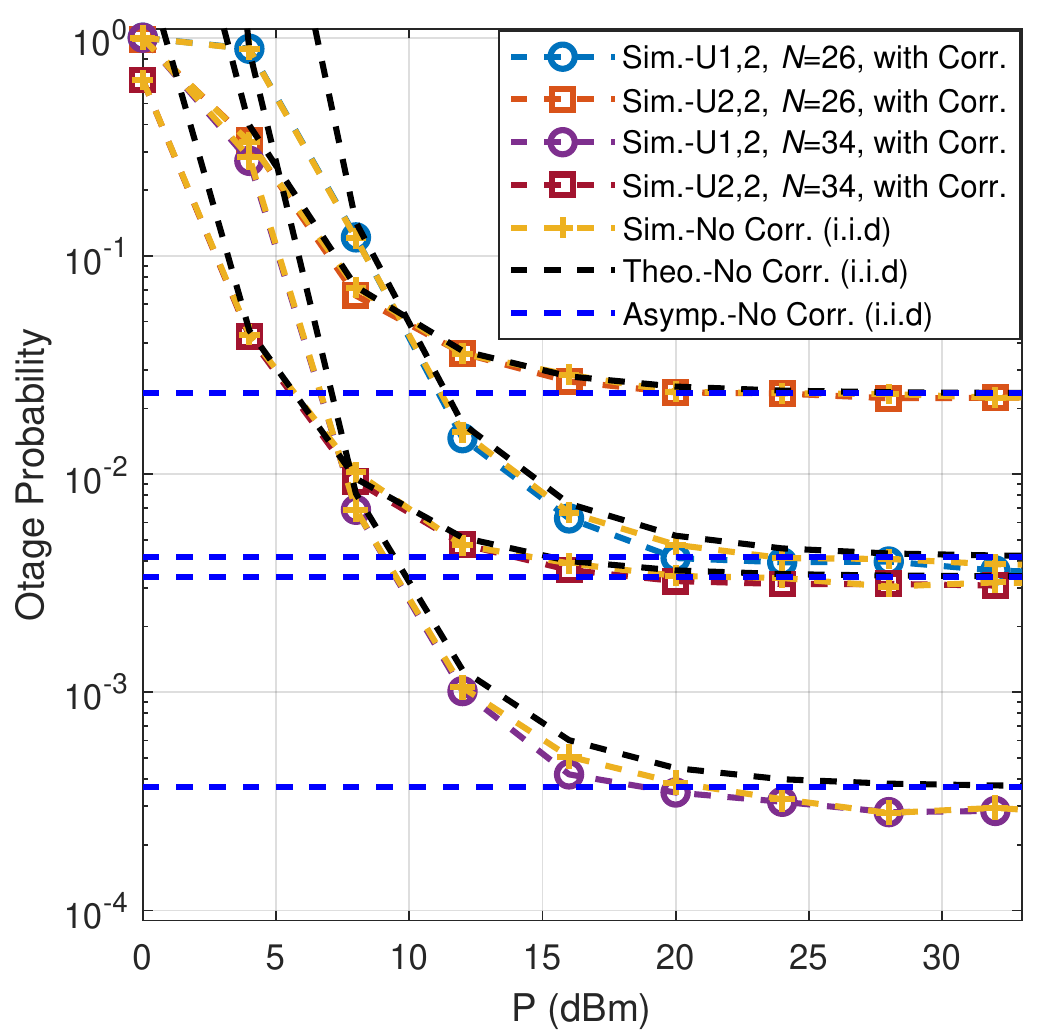}}
 		%	\subfloat[]{\includegraphics[width=45mm, height=50mm]{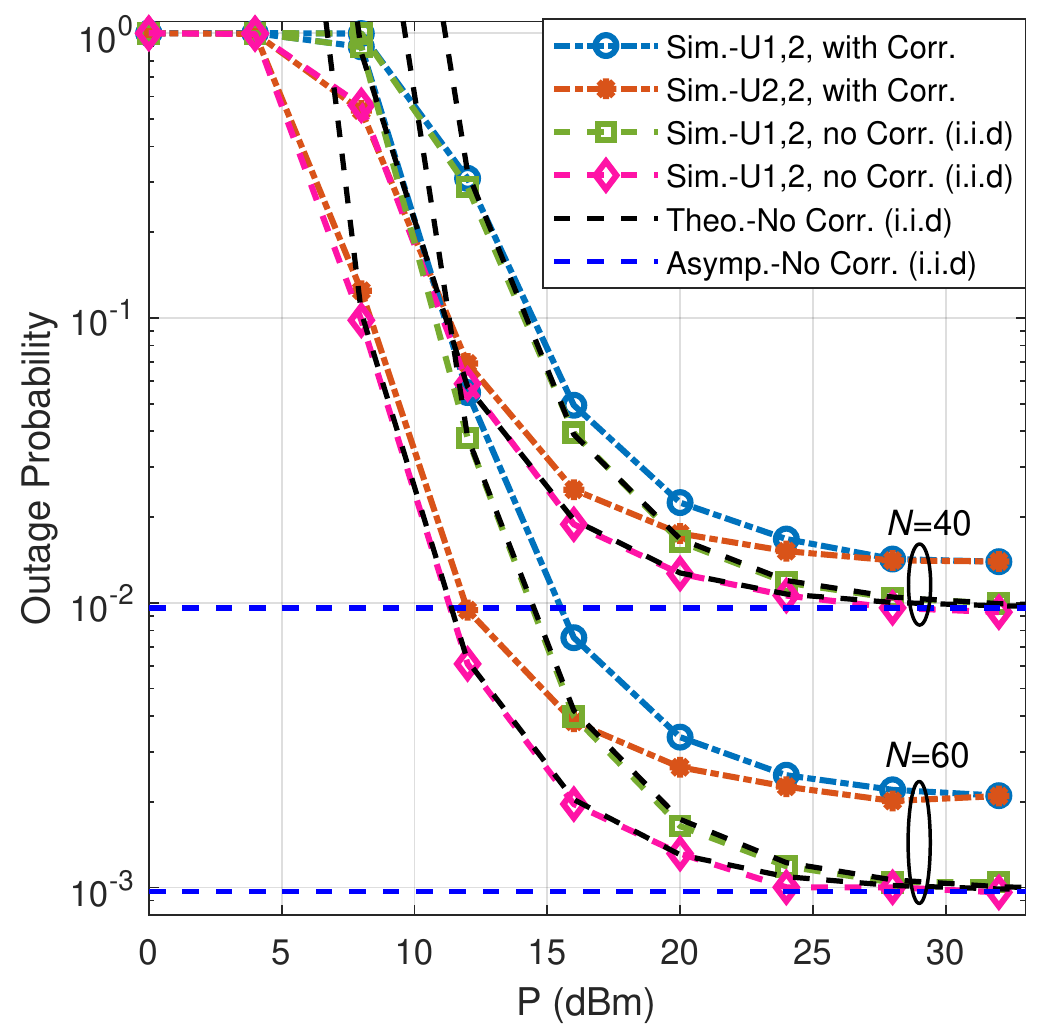}}
 		%	\hfil
 		%	\subfloat[]{\includegraphics[width=45mm,height=50mm]{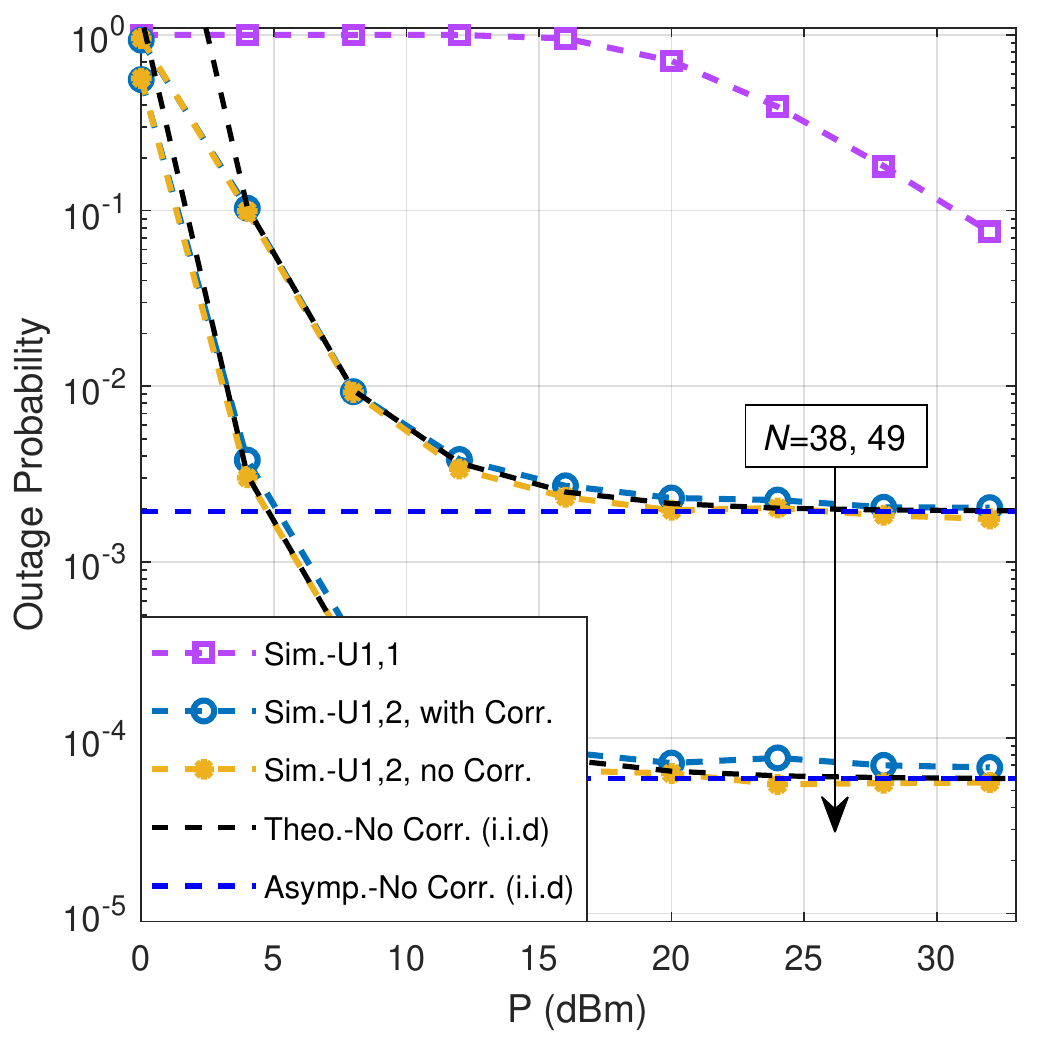}}
 		\subfloat[]{\includegraphics[width=45mm,height=45mm]{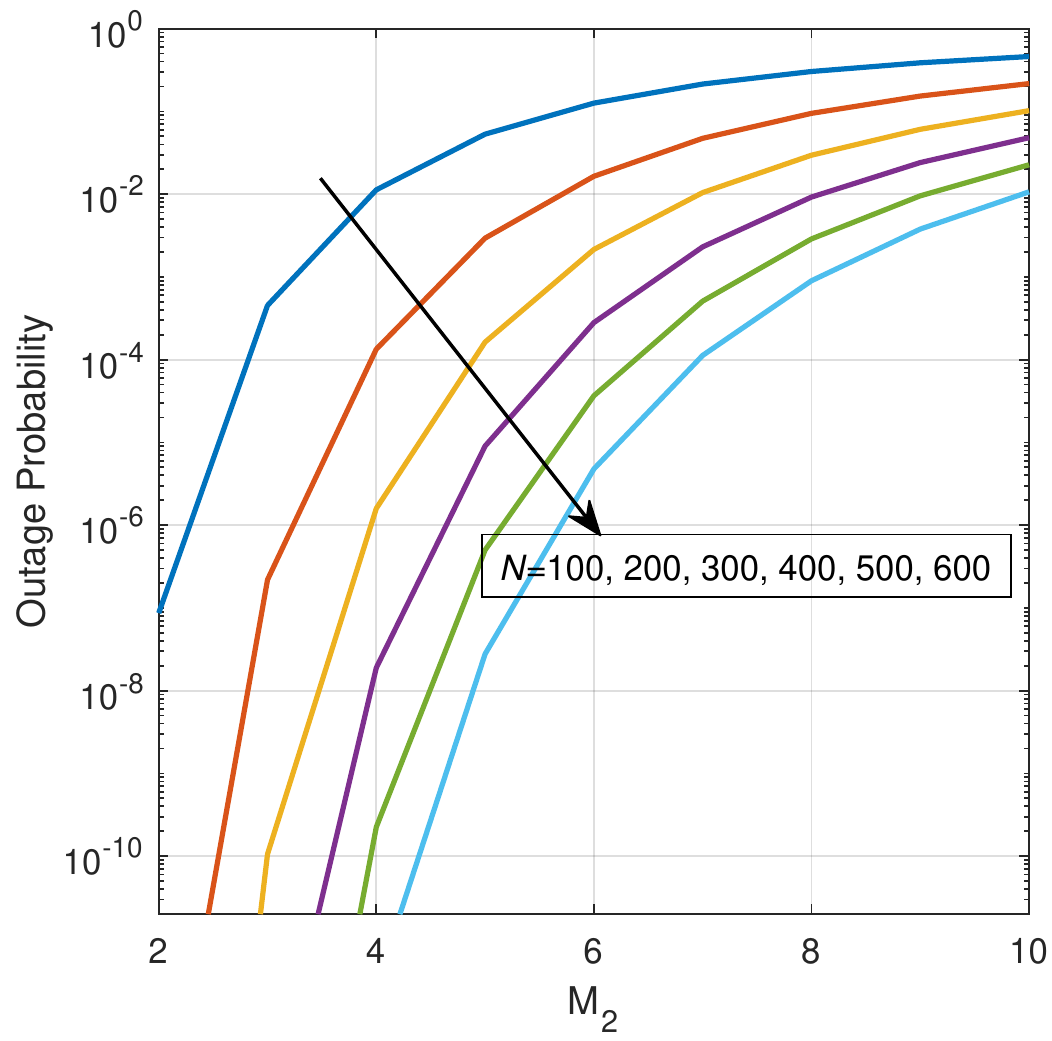}}
 		\caption{Outage probability performance with different $N$ values, for (a) the general case, $M_2=2$, $M_1\geq1$, and a targeted transmission rate of $0.75$ bpcu for all users, (b) different numbers of users in $\text{C}_2$ with uniform partitioning, obtained from \eqref{eq:OP_c22}.}
 		\label{fig:OP_theo}
 	\end{center}
 	\vspace{-0.90cm}
 \end{figure}
\indent Fig. \ref{fig:OP_theo}(a) shows the outage probability performance for the general case, with different $N$ values. Note that, as the performance improves with the increase of $N$, the saturation happens at the high SNR region due to the interference coming from BS and sub-surfaces. It can be also seen that the theoretical and simulation curves are in perfect agreement with each other. Furthermore, it can be noted that the i.i.d.  approximation of the RIS-$\text{U}_{m,2}$ channels is accurate, where the curves with/without spatial correlation are perfectly matched. Fig. \ref{fig:OP_theo}(b) shows the outage probability performance of users in $\text{C}_2$ versus the number of users $M_2$, where the RIS is partitioned uniformly between all users as stated in Corollary 2. It is worth noting that, although the ratio of the amplification to the interferer signals is the same, increasing the RIS size enhances the performance of all users due to the nature of the incoherent interference.
In what follows we compare the proposed scheme with different benchmark schemes by considering the two special cases introduced in Section \ref{sec:Spec} and the general case introduced in Section \ref{sec:Main}. 

\indent In Figs. \ref{fig:Us2_Rate_1} (a) and (b), we consider the first case with $N=40$, $M_1=0$, and $M_2=2$. As it is shown in Fig. \ref{fig:Us2_Rate_1}(a), all the schemes achieve almost the same user fairness, nevertheless, the proposed scheme shines out with an improvement in the required $P$ of $8$ dB  compared to the RIS-NOMA scheme and $14$ dB compared to TDMA-OMA and PD-NOMA schemes. However, the performance gain achieved by the proposed scheme is bounded by a saturation point due to the mutual sub-surfaces interference.
%%%%%%%FIG: Sum-Rate-All, Case 1 %%%%%%%%%%
%%%%%%%%%%%%%%%%%%%%%
\begin{figure}
	\centering     
	%%%%%%%%%%%%%Case1:: R A T E S%%%%%%%%%%%%%%%%%%%
	\subfloat[]{\includegraphics[width=45mm, height=45mm]{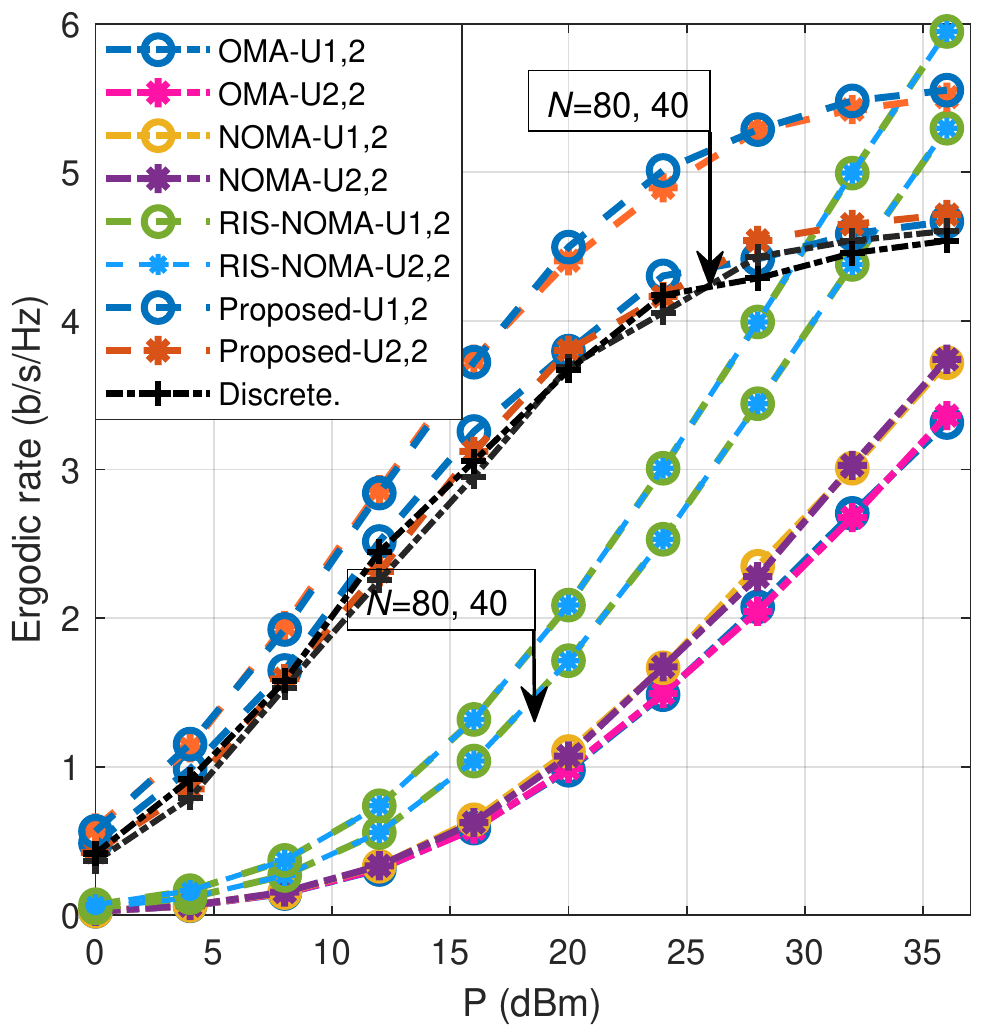}}
	%%%%%%%Case1::Sum Rate%%%%%%%%%%%%%%%%%%%%%
	% 	\subfloat[]{\label{fig1:b}\includegraphics[width=45mm, height=50mm]{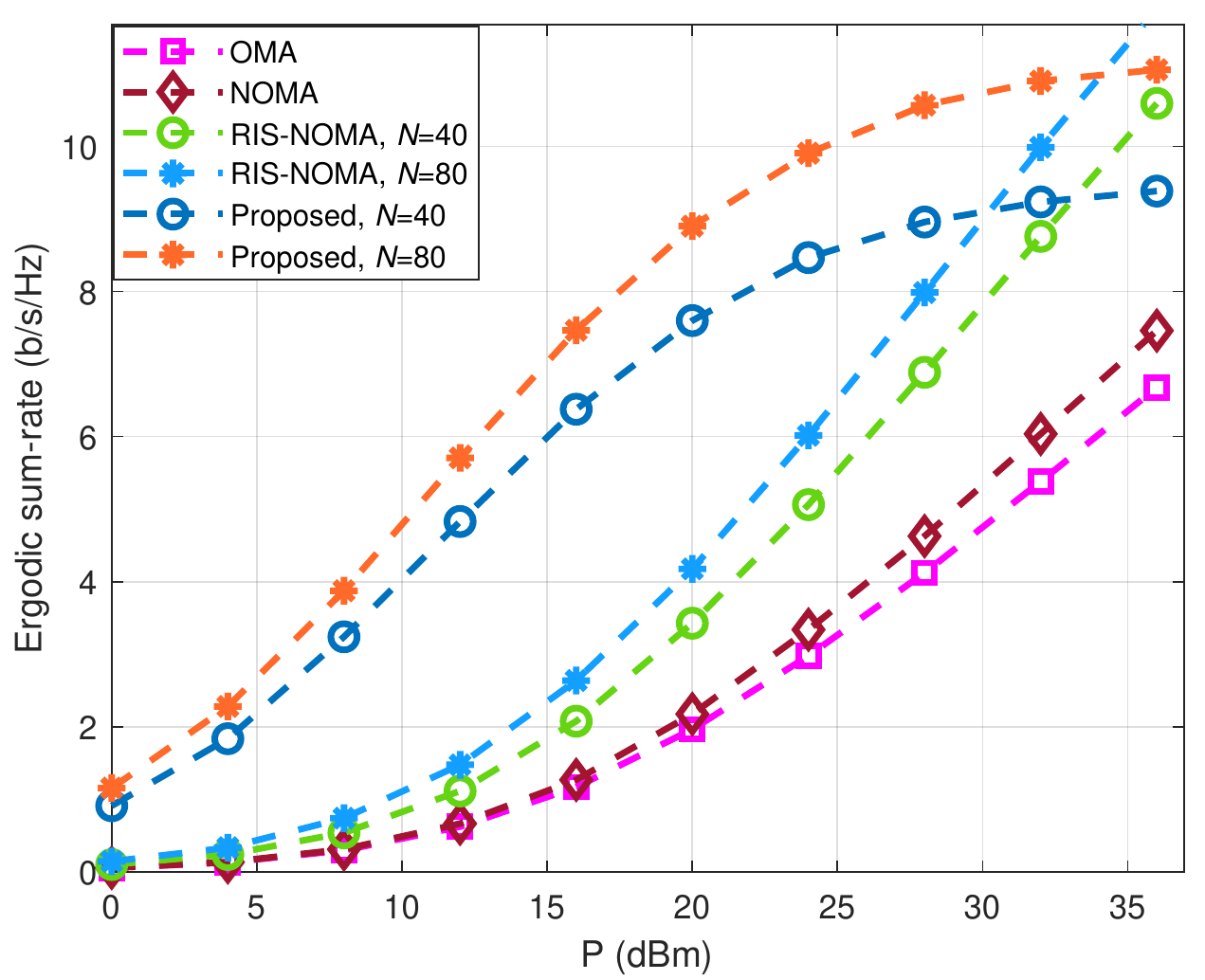}}
	%	\hfill
	%%%%%%%Case1::OP%%%%%%%%%%
	\subfloat[]{\label{fig1:b}\includegraphics[width=45mm, height=45mm]{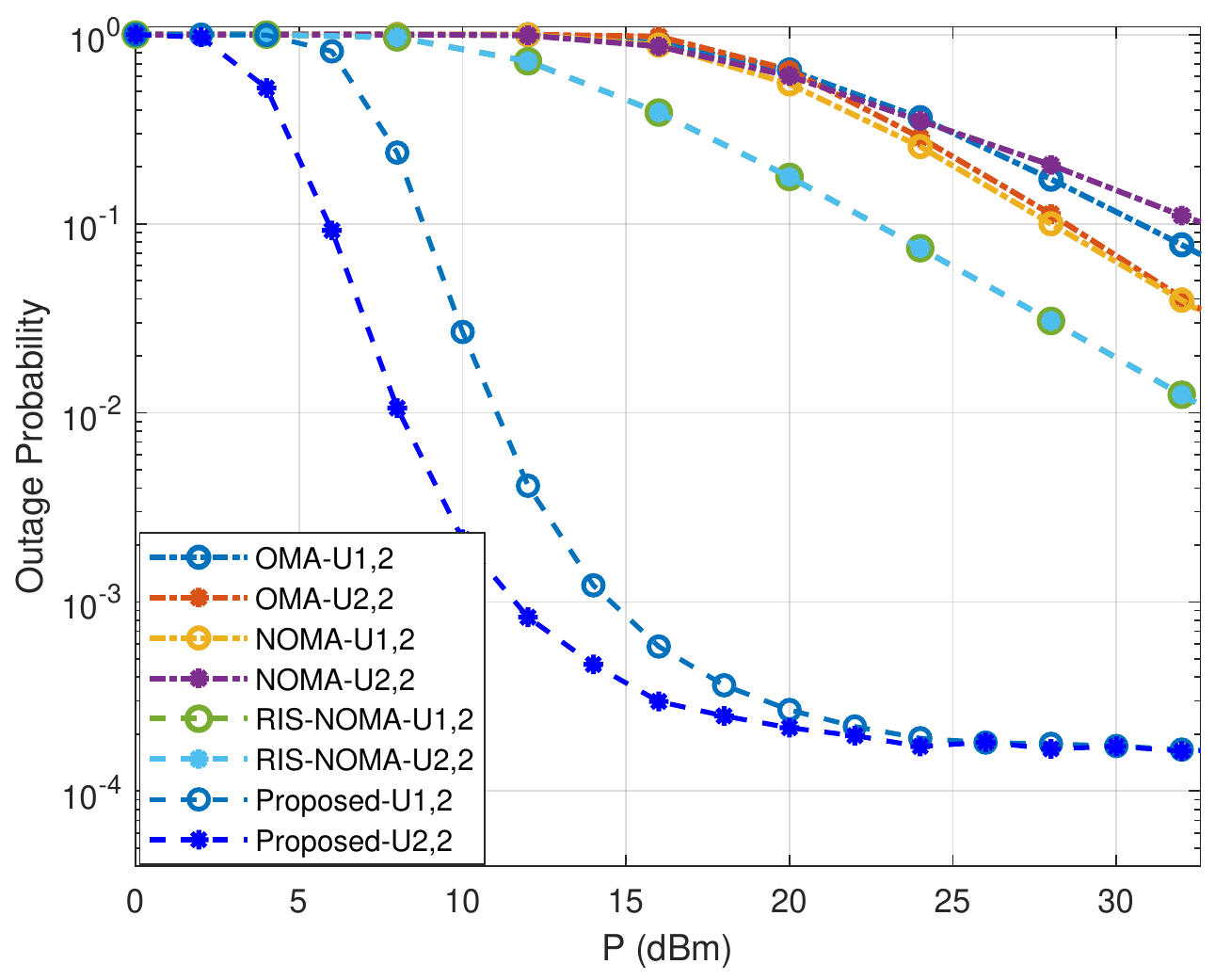}}
	%%%%%%%%%%%%%%%%%%%%%%%%%%%%%%%%%%%%%%%%%
	\caption{The comparison of the proposed and benchmark schemes with $M_1=0$ and $M_2=2$, in terms of (a) ergodic rate with $N=$40, (b) the outage probability, with a targeted transmission rate of $1.2$ bpcu for all users, and $N=40$.}
	\label{fig:Us2_Rate_1}
	\vspace{-0.7cm}
\end{figure}
%%%%%%%$$$$$$$$$%%%%%%%%%%%%%%
%%%%%%%$$$$$$$$$%%%%%%%%%%%%%%
%%%%%%%FIG: Sum-Rate-All, Case 2 %%%%%%%%%%
\begin{figure}[t!]
	\centering     
	%%%%%%%%%%%%%Case2:: R A T E S%%%%%%%%%%%%%%%%%%%
	\subfloat[]{\includegraphics[width=45mm, height=40mm]{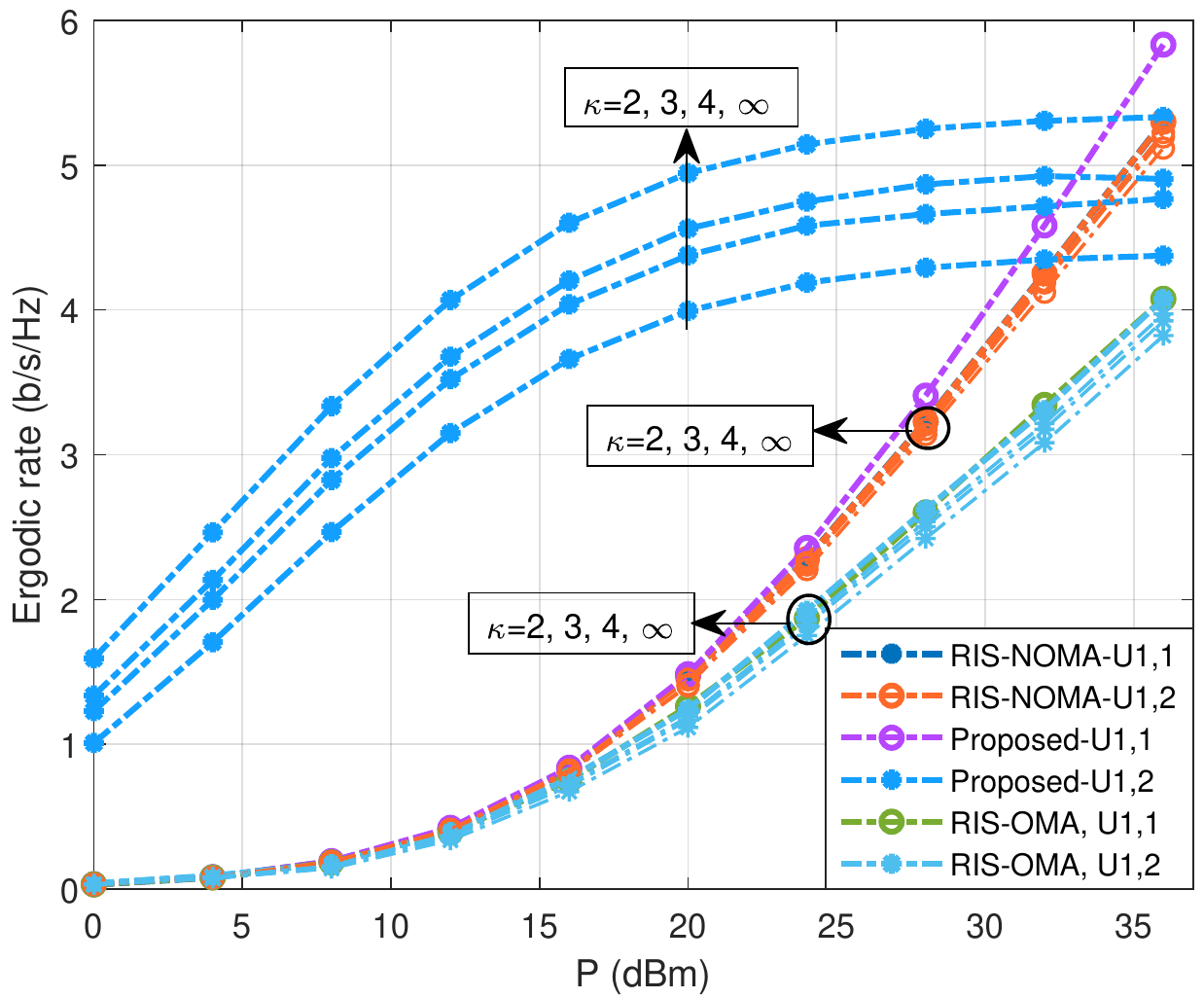}}
	%%%%%%%Case2::Sum Rate%%%%%%%%%%%%%%%%%%%%%
	%	\subfloat[]{\label{fig2:b}\includegraphics[width=55mm, height=40mm]{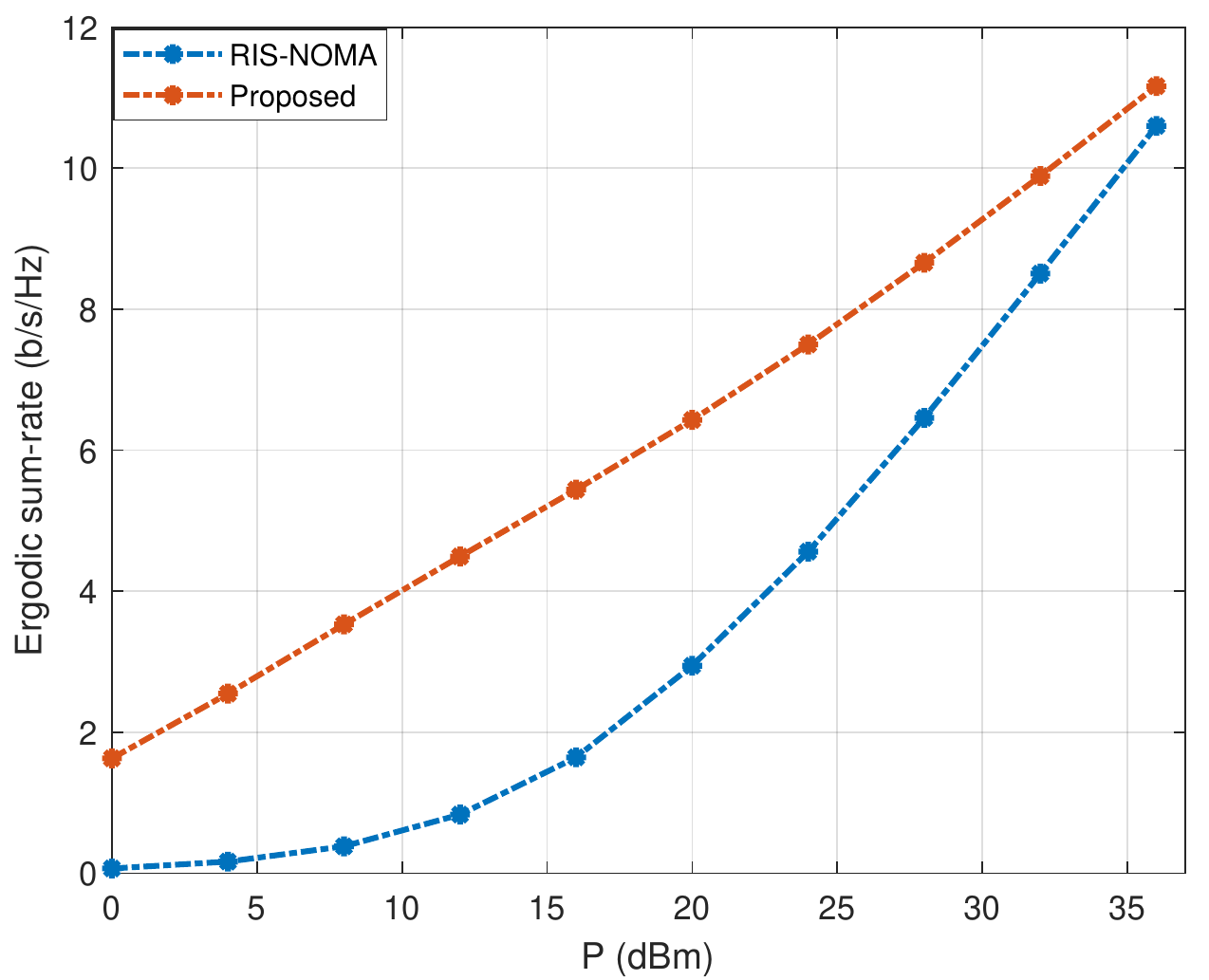}}
	%%%%%%%Case2::OP%%%%%%%%%%
	\subfloat[]{\includegraphics[width=45mm, height=40mm]{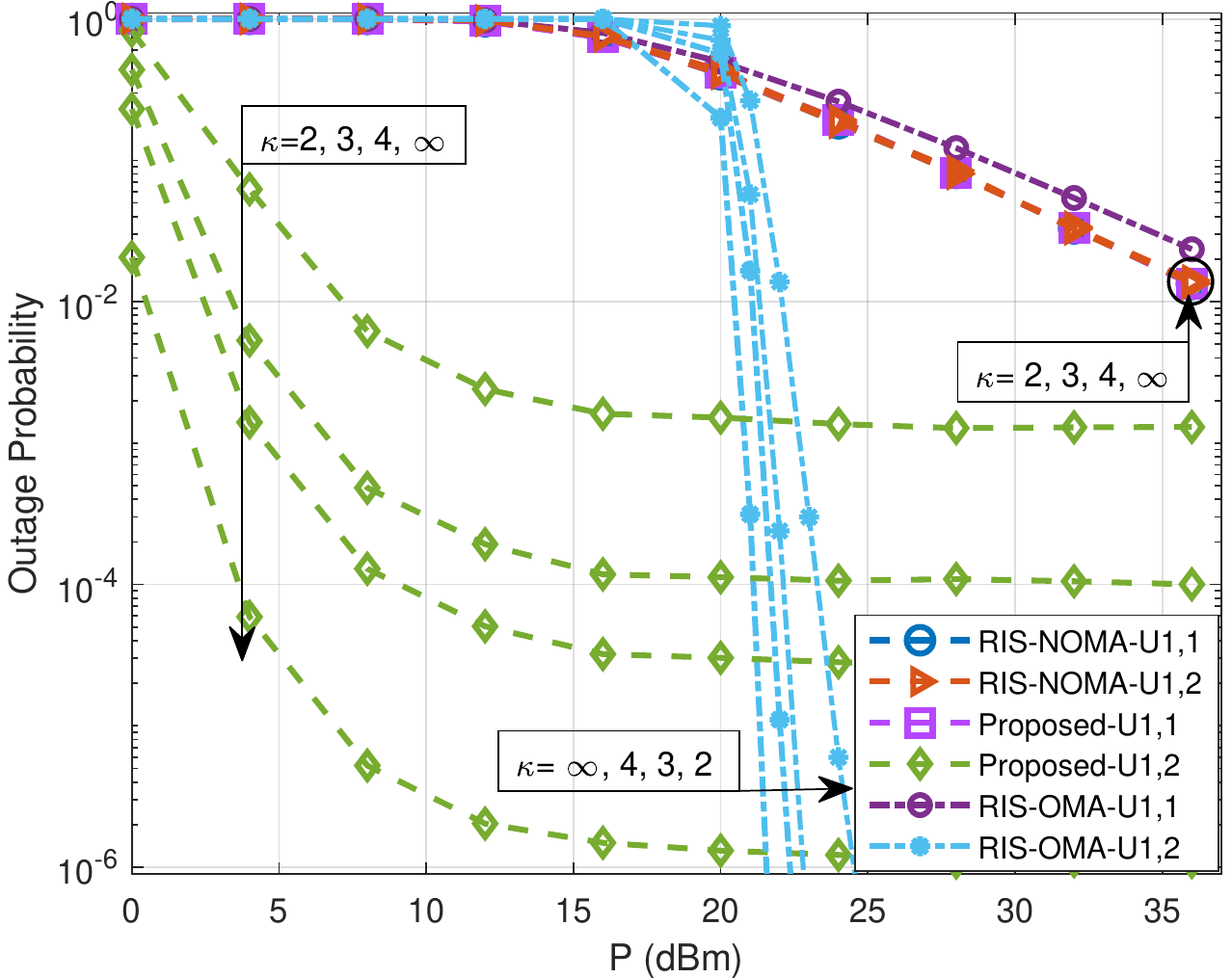}}
	%%%%%%%%%%%%%%%%%%%%%%%%%%%%%%%%%%%%%%%%%
	\caption{The comparison of the proposed scheme with the RIS-OMA (TDMA) and RIS-NOMA benchmark schemes, with $M_1=M_2=1$ and $N=40$, in terms of (a) ergodic rate, (b) outage probability, with a targeted transmission rate of $1.2$ bpcu for all users.}
	\label{fig:Us2_Rate_2}
	\vspace{-0.7cm}
\end{figure}It can be also noted that the discrete phase shift adjustment with only $8$ phase shift levels achieves almost the same performance as in the continuous case, where, assuming uniform quantization for the interval $ [0,2\pi)$, $3$ bits are used to select the phase shift from $Z=2^3$ levels in the finite set $\mathcal{F}=\{0, \Delta \Phi, ..., \Delta \Phi(Z-1)\}$, where $\Delta \Phi=\frac{2\pi}{Z}$ \cite{BeamformingPhs}. Fig \ref{fig:Us2_Rate_1} (b) shows the outage probability comparison with a targeted transmission rate of $1.2$ bpcu for all users. It can be seen that the proposed scheme outperforms the benchmark schemes with a around $20$ dB in the required $P$ for both users before the saturation region.\\
\indent In Figs. \ref{fig:Us2_Rate_2}(a) and \ref{fig:Us2_Rate_2}(b), we consider the second case with $N=40$ and $M_1=M_2=1$, as follows. Fig. \ref{fig:Us2_Rate_2}(a) shows that $\text{U}_{1,1}$ in the proposed scheme and $\text{U}_{1,1}$ and $\text{U}_{1,2}$ in the RIS-NOMA scheme, achieve almost the same ergodic rate performance, with a $2$ dB improvement in the required $P$ for the proposed scheme at the high SNR region. On the other hand, $\text{U}_{1,2}$ in the proposed scheme achieves a $12$-$24$ dB improvement compared to the other users when there is no phase estimation errors, which can be explained by the asymptotic squared power gain of $\mathcal{O}(N^2)$ the RIS provides to $\text{U}_{1,2}$ \cite{RIS-Reflct}. However, there is a saturation point for the performance of $\text{U}_{1,2}$ due to the BS-$\text{U}_{1,2}$ interference link. Considering the user fairness, contrary to might be concluded for the first glance from Fig. \ref{fig:Us2_Rate_2}(a), the proposed scheme achieves the maximum user fairness, which can be explained as follows. Since the communication link over the RIS is blocked for $\text{U}_{1,1}$ in both schemes, the most efficient option, in terms of the sum-rate performance and user fairness, is to fully allocate the RIS to serve $\text{U}_{1,2}$ and the BS to serve $\text{U}_{1,1}$ in both schemes, which is what the proposed scheme basically does. On the other side, the use of SC in the
%%%%%%%FIG: Sum-Rate-All, General Case 4 us %%%%%%%%%%
\begin{figure*}[t!]
	\centering  
	%%%%%%%%%% 4Us case:: R A T E S%%%%%%%%%%%%
	\subfloat[]{\includegraphics[width=55mm, height=45mm]{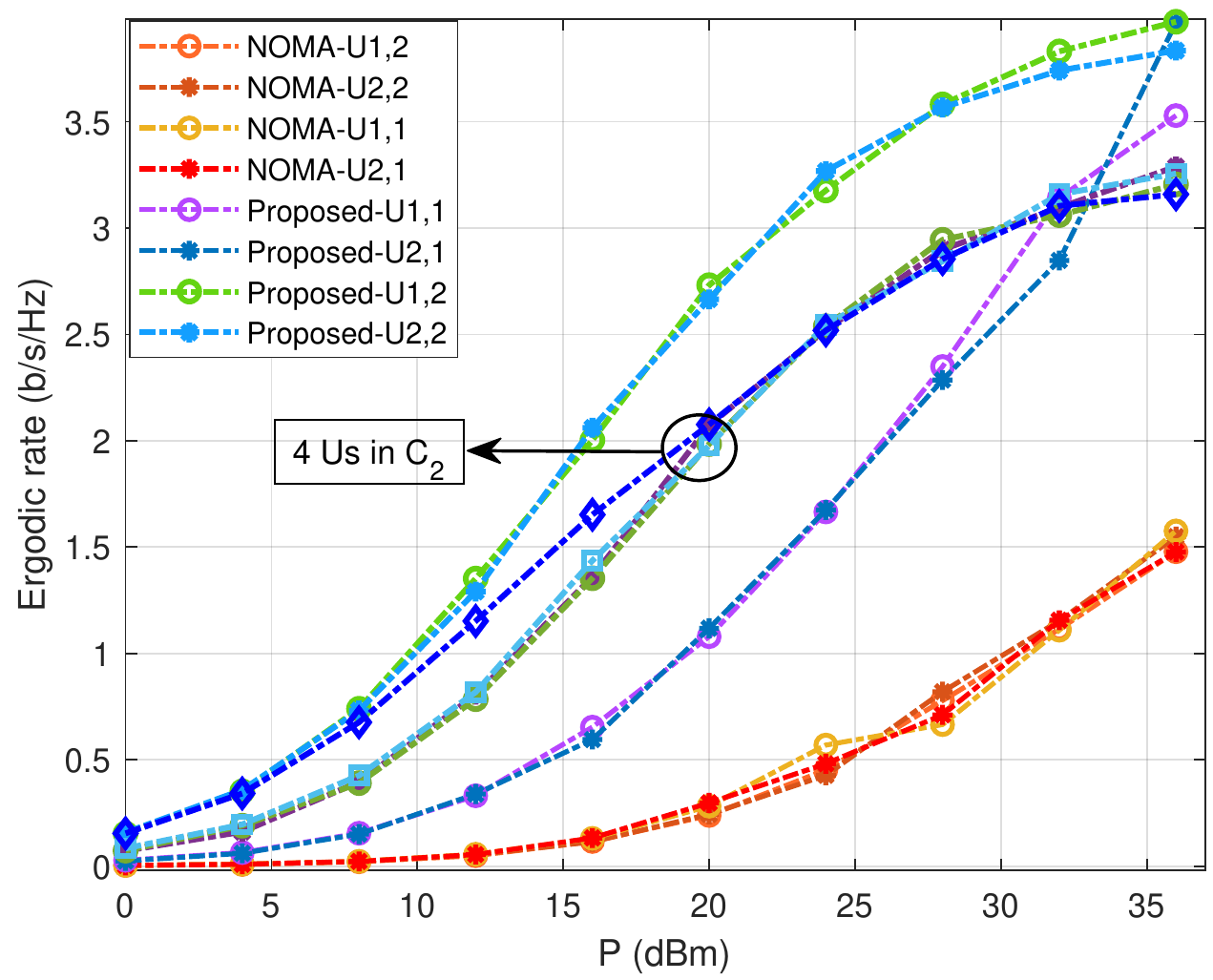}}
	%%%%%%%%% 4Us case:: S U M-R A T E S%%%%%%%%%%%%
	\subfloat[]{\includegraphics[width=55mm, height=45mm]{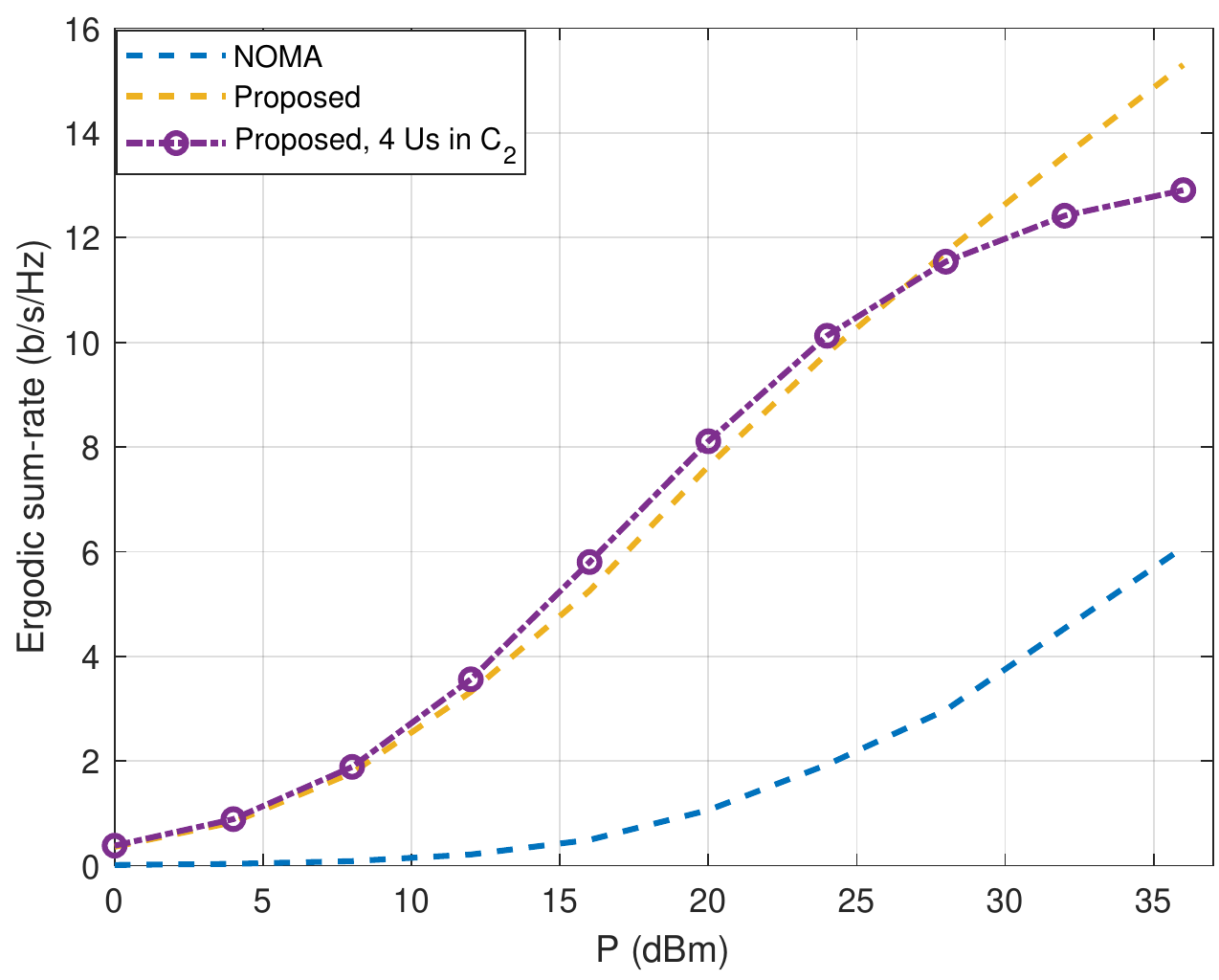}}
	%%%%%%%%%%%%% 4Us case:: OP%%%%%%%%%%%%%%%%%
	\subfloat[]{\includegraphics[width=55mm, height=45mm]{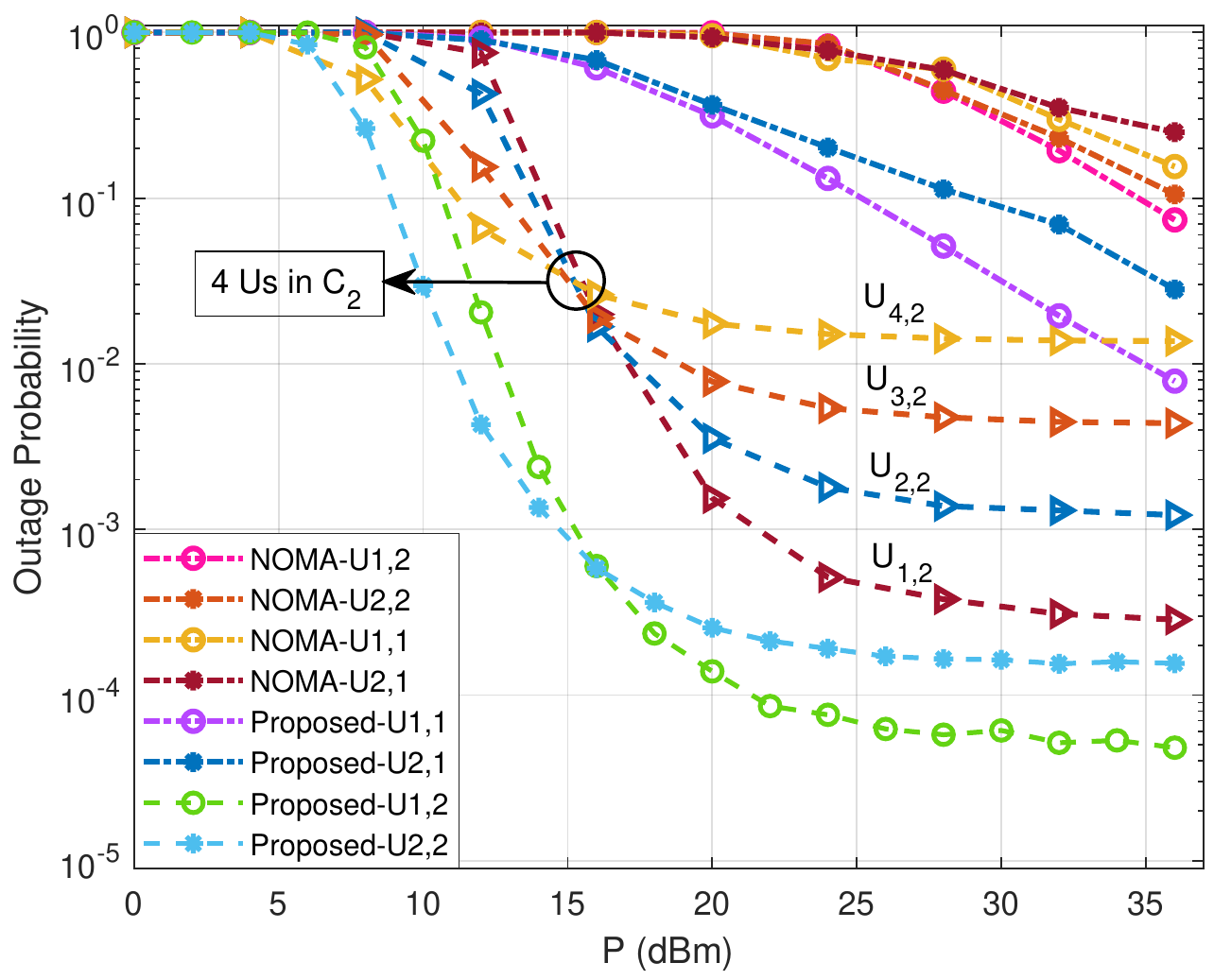}}
	%%%%%%%%%%%%%%%%%%%%%%%%%%%%%%%%%%%%%%%%%
	\caption{The comparison of the proposed and the NOMA benchmark schemes with $M_1=M_2=2$ and  $M_1=0,M_2=4$ with $N=40, 80$, respectively, in terms of (a) ergodic rate, (b) ergodic sum-rate, (c) outage probability, with a targeted transmission rate of $0.75$ bpcu for all users.}
	\label{fig:Us4_Rates}
			\vspace{-0.0cm}
\end{figure*}
 benchmark scheme makes the RIS amplifies the interference from $\text{U}_{1,1}$ to $\text{U}_{1,2}$ without any benefit for $\text{U}_{1,1}$, which is unfair to $\text{U}_{1,2}$. In Fig. \ref{fig:Us2_Rate_2}(b), $\text{U}_{1,2}$ in the proposed scheme outperforms the other users in both schemes in terms of the outage probability with a remarkable improvement of $20$-$35$ dB in the required $P$, while the other users achieve the same performance, which agrees with the ergodic rate results shown in Fig. \ref{fig:Us2_Rate_2}(a). For the phase estimation errors, we consider the von Mises distribution, where $\kappa$ is the concentration parameter. It can be seen from Figs. 5(a) and 5(b) that the benchmark schemes are considerably less sensitive to the phase estimation errors compared to the proposed scheme, which can be explained as follows. The benchmark schemes, unlike the proposed one, have a direct information link which, with this small RIS size, dominates the RIS link and therefore, the performance loss due to the phase estimation errors does not appear clearly. With a larger RIS size that makes the BS-RIS-$\text{U}_{m,2}$ reflection link dominate, the benchmark schemes are also expected to experience a similar performance behavior, with the phase estimation errors.\\
 \indent Finally, in Figs. \ref{fig:Us4_Rates}(a)-(c), we consider the general case of four users for the proposed and NOMA schemes. Fig. \ref{fig:Us4_Rates}(a) shows the ergodic rate performance for all users, where almost maximum user fairness is achieved by the NOMA scheme. On the other side, the proposed scheme provides a close to the maximum fairness performance when all users are located in $\text{C}_2$, $M_1=0,M_2=4$, with $N=80$. A better user fairness performance is noted for  $M_1=2,M_2=2,$ and $N=40$ between the two users in each cluster, which corresponds to the maximum fairness between all the four users due to the same reasons explained in the discussion of Fig. \ref{fig:Us2_Rate_2}(a). Furthermore, when $M_1=M_2=2, N=40$, the performance of the proposed scheme outperforms the one of NOMA scheme for each user significantly, with more than $12$ dB and $22$ dB improvement for the two users in $\text{C}_1$ and the two users in $\text{C}_2$, respectively. On the other side, when $M_1=0, M_2=4, N=80$, $18$ dB gain is observed compared to NOMA scheme, with a $4$ dB less gain compared to the previous case due to the mutual sub-surfaces interference. Overall, for both users' deployment scenarios, an improvement of $18$ dB in the ergodic sum-rate is achieved by the proposed scheme compared to the NOMA scheme, as shown in Fig. \ref{fig:Us4_Rates}(b). In Fig. \ref{fig:Us4_Rates}(c), when $M_1=M_2=2, N=40$, the proposed scheme outperforms NOMA scheme in the outage probability performance with around $22$ dB in the required $P$ for the first and second users in $\text{C}_2$, and around $5$ dB and $10$ dB for the first and second users in $\text{C}_1$, respectively. For the case where $M_1=0, M_2=4, N=80$, the proposed scheme outperforms the NOMA scheme with $22$ dB until the saturation region.
% \vspace{-0.5cm}
\section{Conclusion}\label{sec:Concl}
 %\vspace{-0.5cm}
In this paper, we have introduced a novel downlink NOMA solution with RIS partitioning in order to mitigate the mutual interference between users in a local beyond 5G network. In the proposed system, the ergodic rates and outage probabilities of all users are enhanced and the user fairness is maximized by the fair and efficient distribution of the PR among users. The potential of the proposed PR distribution is perfectly illustrated in Fig. \ref{fig:Us2_Rate_2}, where the BS and RIS are used in a very efficient way compared to the classical use of the SC technique. Furthermore, we have proposed three efficient searching algorithms to, sequentially, obtain a sub-optimal solution for the RIS partitioning optimization problem, with insignificant performance degradation. By considering different users' deployment scenarios, it was shown that the proposed system provides remarkable performance gain in all of the considered different environment settings. We have derived the exact and asymptotic outage probability expressions for the proposed system in all the cases including the effect of the number of users in $\text{C}_2$ and the RIS size. The computer simulations show that the proposed system outperforms the OMA, RIS-OMA, NOMA, and RIS-NOMA benchmark systems significantly, in terms of outage probability, ergodic rates of all users, and user fairness. Finally, removing the sub-surfaces and/or the BS interference for $\text{C}_2$ users appears as a future research direction that is worth investigating.
\vspace{-0cm}
% if have a single appendix:
\appendices
\section{Proof of Proposition 1}
From \eqref{eq:OP-SNR}, we obtain
%%%%%%%%%%%%%%%%%%%%%%%%%%%%%%%%%%%%%
\begin{align}
	P_\text{out}&=\left(\frac{A_m}{I_m+\frac{1}{\rho}}<2^{\gamma^*_m}-1\right)\nonumber \\
	&=P\left(Y<y\right)\label{eq:cdf2}
\end{align}
where $Y=A_m-\bar{I}_m$, $y=\frac{2^{\gamma_m^*}-1}{\rho}$, and $\bar{I}_m$ is given by
\begin{align}
	\bar{I}_m=(2^{\gamma_m^*}-1)I_m=|\sqrt{(2^{\gamma_m^*}-1)}(I_{\text{\text{RIS}}}+v_m)|^2\label{eq:I_m_bar}.
\end{align}
%%%%%%%%%%%%%%%%%%
From \eqref{eq:cdf2}, we note that the outage probability is equivalent to the cumulative distribution function (CDF) of $Y$. In order to find the CDF of $Y$, we first find the distribution of the RVs $A_m$ and $\bar{I}_m$, as follows. By considering \eqref{eq:A1}, we note that $\beta_{m,m}^{(n)}$ is a Rayleigh distributed RV with a mean $\text{E}[\beta_{m,m}^{(n)}]=\sqrt{\pi}/2$, and a variance $\text{VAR}[\beta_{m,m}^{(n)}]=(4-\pi)/4$.
According to the central limit theorem (CLT), for $N_m>>1$, the term inside the squared parenthesis is a Gaussian RV,  $\sim\mathcal{N}(\sqrt{L^{\text{RIS}}_{m}}N_m\frac{\sqrt{\pi}}{2},\allowbreak L^{\text{RIS}}_{m}N_m\frac{4-\pi}{4})$. Thus, $A_m$ is a non-central chi-square ($\chi^2$) RV with one degree of freedom. Likewise, by considering \eqref{eq:I_RIS}, for $N-N_m>>1$, $I_{\text{\text{RIS}}}$ is a Gaussian RV, $I_{\text{\text{RIS}}}\sim\mathcal{C}\mathcal{N}(0,L^{\text{RIS}}_{m}(N-N_m))$. Consequently, the constant-scaled sum of the two independent Gaussian RVs in \eqref{eq:I_m_bar}, $\sqrt{(2^{\gamma_m^*}-1)}(I_{\text{\text{RIS}}}+v_m)$, is a Gaussian RV, $\sim\mathcal{C}\mathcal{N}(0,(2^{\gamma_m^*}-1)(L^{\text{RIS}}_{m}(N-N_m)+L_m^{\text{BS}})$. Thus, $\bar{I}_m$ is a central $\chi^2$ RV with two degrees of freedom, and consequently, $Y$ is the difference of a non-central and central independent $\chi^2$ RVs, where its CDF is given in \eqref{eq:OP1} \cite{Involv_GaussRVs}. This completes the proof of Proposition 1.\hspace{8.3cm}\qedsymbol   
% or
%\appendix  % for no appendix heading
% do not use \section anymore after \appendix, only \section*
% is possibly needed
% use appendices with more than one appendix
% then use \section to start each appendix
% you must declare a \section before using any
% \subsection or using \label (\appendices by itself
% starts a section numbered zero.)
%
%\appendices
%\section{Proof of the First Zonklar Equation}
%Appendix one text goes here.
% you can choose not to have a title for an appendix
% if you want by leaving the argument blank
%\section{}
%Appendix two text goes here.
% Can use something like this to put references on a page
% by themselves when using endfloat and the captionsoff option.
%\ifCLASSOPTIONcaptionsoff
 % \newpage
%\fi
% trigger a \newpage just before the given reference
% number - used to balance the columns on the last page
% adjust value as needed - may need to be readjusted if
% the document is modified later
%\IEEEtriggeratref{8}
% The "triggered" command can be changed if desired:
%\IEEEtriggercmd{\enlargethispage{-5in}}
% references section
% can use a bibliography generated by BibTeX as a .bbl file
% BibTeX documentation can be easily obtained at:
% http://mirror.ctan.org/biblio/bibtex/contrib/doc/
% The IEEEtran BibTeX style support page is at:
% http://www.michaelshell.org/tex/ieeetran/bibtex/
\bibliographystyle{IEEEtran}
\bibliography{IEEEabrv,Bibliography}
\vspace{1.7cm}
\begin{IEEEbiography}[{\includegraphics[width=1in,height=1.25in,clip,keepaspectratio]{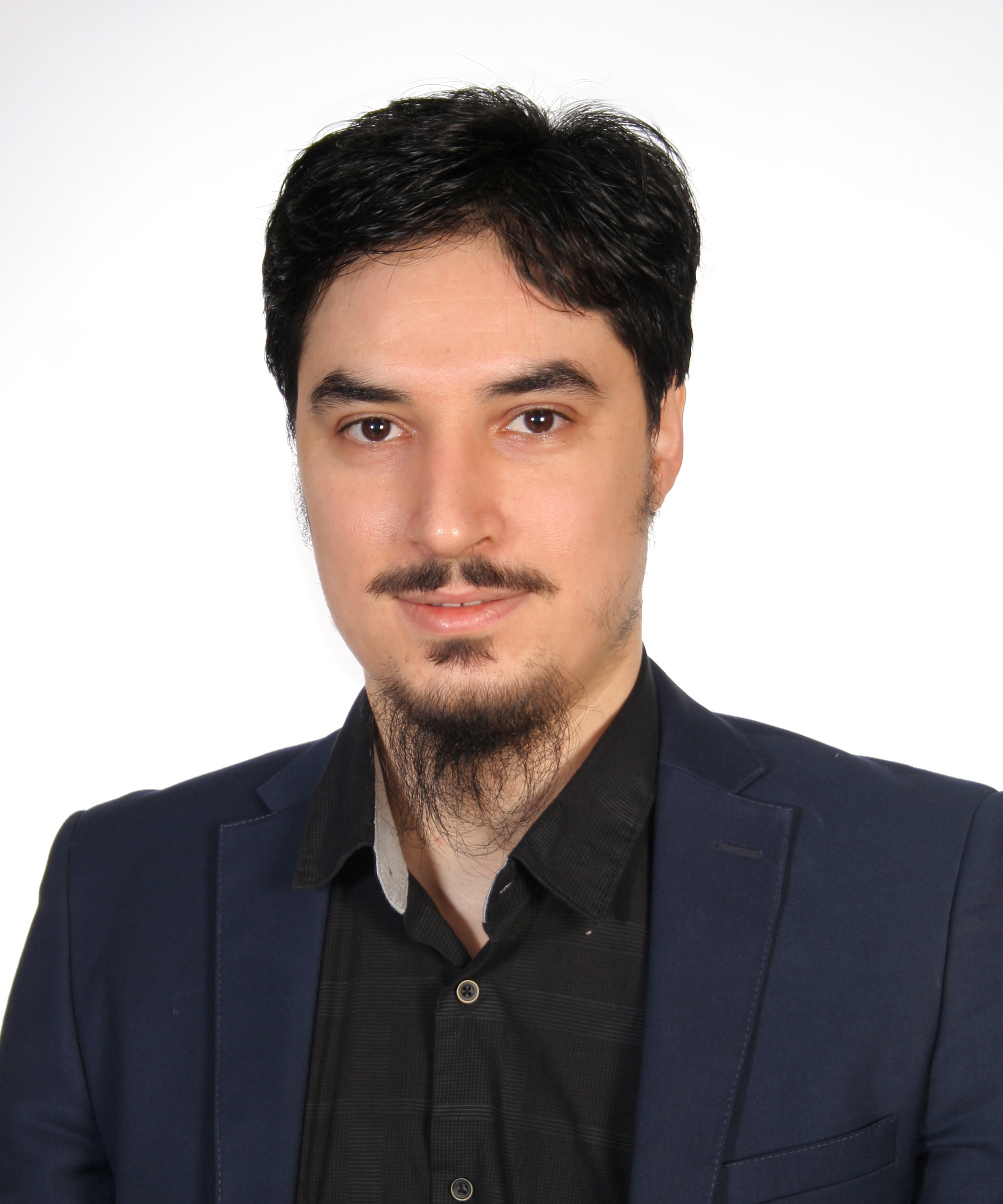}}]{Aymen Khaleel {\normalfont received the B.Sc. degree from the University of Anbar, Al Anbar, Iraq, in 2013, and the M.Sc. degree from Turkish Aeronautical Association University, Ankara, Turkey, in 2017. He is currently pursuing his Ph.D. in Electrical and Electronics Engineering at Ko\c{c} University, Istanbul, Turkey, where he is currently a Project Engineer. His research interests include MIMO systems, index modulation, reconfigurable intelligent surfaces-based systems. He serves as a Reviewer for  \textit{IEEE Transactions on Wireless Communications}, \textit{IEEE Transactions on Vehicular Technology}, \textit{IEEE Communications Magazine}, \textit{IEEE Wireless Communications Letters}, and \textit{IEEE Communications Letters}.}}
\end{IEEEbiography}
%%%%Dr. Ertugrul
\begin{IEEEbiography}[{\includegraphics[width=1in,height=1.25in,clip,keepaspectratio]{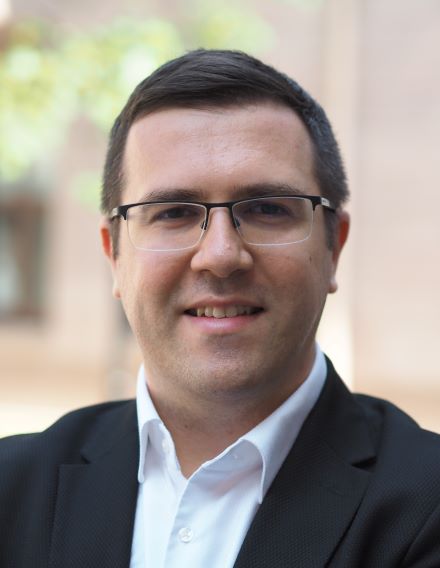}}]{Ertugrul Basar {\normalfont received his Ph.D. degree from Istanbul Technical University in 2013. He is currently an Associate Professor with the Department of Electrical and Electronics Engineering, Ko\c{c} University, Istanbul, Turkey and the director of Communications Research and Innovation Laboratory (CoreLab). His primary research interests include beyond 5G systems, index modulation, intelligent surfaces, waveform design, and signal processing for communications. Dr. Basar currently serves as a Senior Editor of \textit{IEEE Communications Letters} and an Editor of \textit{IEEE Transactions on Communications} and \textit{Frontiers in Communications and Networks}. He is a Young Member of Turkish Academy of Sciences and a Senior Member of IEEE.}}
	% or if you just want to reserve a space for a photo:
	
	%\begin{IEEEbiography}{Michael Shell}
	%Biography text here.
\end{IEEEbiography}
\end{document}